\newcommand\Mycomb[2][^n]{\prescript{#1\mkern-0.5mu}{}C_{#2}}
\newcommand{\baseline}{\textsc{BB-Base}\xspace}
\newcommand{\bucketing}{\textsc{BB-Bucket}\xspace}
\newcommand{\parbucketing}{\textsc{ParBB-Bucket}\xspace}
\newcommand{\dbpedia}{\textbf{\texttt{DbPedia}}\xspace}
\newcommand{\dbpedias}{\textbf{\texttt{DB}}\xspace}
\newcommand{\twitter}{\textbf{\texttt{Twitter}}\xspace}
\newcommand{\twitters}{\textbf{\texttt{TW}}\xspace}
\newcommand{\amazon}{\textbf{\texttt{Amazon}}\xspace}
\newcommand{\amazons}{\textbf{\texttt{AM}}\xspace}
\newcommand{\livejournal}{\textbf{\texttt{Live-Journal}}\xspace}
\newcommand{\livejournals}{\textbf{\texttt{LJ}}\xspace}
\newcommand{\bonanza}{\textbf{\texttt{Bonanza}}\xspace}
\newcommand{\bonanzas}{\textbf{\texttt{BO}}\xspace}
\newcommand{\senate}{\textbf{\texttt{Senate}}\xspace}
\newcommand{\senates}{\textbf{\texttt{SE}}\xspace}
\newcommand{\house}{\textbf{\texttt{House}}\xspace}
\newcommand{\houses}
{\textbf{\texttt{HO}}\xspace}
\newcommand{\tracker}{\textbf{\texttt{Tracker}}\xspace}
\newcommand{\trackers}{\textbf{\texttt{TR}}\xspace}
\newcommand{\orkut}{\textbf{\texttt{Orkut}}\xspace}
\newcommand{\orkuts}
{\textbf{\texttt{OR}}\xspace}
\newcommand{\remove}[1]{}
\newcommand{\quotes}[1]{``#1''}
\newcommand{\butterfly}{\mathbin{\rotatebox[origin=c]{90}{$\hourglass$}}}
\newcommand{\balancebutterfly}{\mathbin{\rotatebox[origin=c]{90}{$\hourglass$}^\star}}
\newtheorem{definition}{Definition}
\newtheorem{theorem}{Theorem}
\title{Balanced Butterfly Counting in Bipartite-Network}
\author{Apurba Das}
\email{apurba@hyderabad.bits-pilani.ac.in }
\affiliation{%
  \institution{Birla Institute of Technology and Science Pilani }
  \city{Hyderabad }
  \country{India.}
}
\author{Aman Abidi}
\email{aman.abidi@ntu.edu.sg}
\affiliation{%
  \institution{Nanyang Technological University}
   \country{Singapore}}
\author{Ajinkya	Shingane}
\email{ f20191375@hyderabad.bits-pilani.ac.in}
\affiliation{%
  \institution{Birla Institute of Technology and Science Pilani }
  \city{Hyderabad }
  \country{India.}
}
\author{Mekala	Kiran}
\email{ p20220017@hyderabad.bits-pilani.ac.in}
\affiliation{%
  \institution{Birla Institute of Technology and Science Pilani }
  \city{Hyderabad }
  \country{India.}
}
\begin{document}

\begin{abstract}
Bipartite graphs offer a powerful framework for modeling complex relationships between two distinct types of vertices, incorporating probabilistic, temporal, and rating-based information. While the research community has extensively explored various types of bipartite relationships, there has been a notable gap in studying \textit{Signed Bipartite Graphs}, which capture liking/ disliking interactions in real-world networks such as customer-rating-product and senator-vote-bill. Balance butterflies, representing $2 \times 2$ bicliques, provide crucial insights into antagonistic groups, balance theory, and fraud detection by leveraging the signed information. However, such applications require counting balance butterflies which remains unexplored. In this paper, we propose a new problem: \textit{counting balance butterflies in a signed bipartite graph}.
To address this problem, we adopt state-of-the-art algorithms for butterfly counting, establishing a smart baseline that reduces the time complexity for solving our specific problem. We further introduce a novel bucket approach specifically designed to count balanced butterflies efficiently. We propose a parallelized version of the bucketing approach to enhance performance. Extensive experimental studies on nine real-world datasets demonstrate that our proposed bucket-based algorithm is up to \textbf{120x} faster over the baseline, and the parallel implementation of the bucket-based algorithm is up to \textbf{45x} faster over the single core execution. Moreover, a real-world case study showcases the practical application and relevance of counting balanced butterflies.
\end{abstract}

\maketitle

\section{Introduction}
Many real-world data, such as movies and their actors, products with the customers who like the products, research papers, and authors can be naturally represented using the bipartite network. 
A bipartite graph $G = (U, V, E)$ comprises two disjoint or independent vertex sets $U$ and $V$, and an edge set $E$ such that $E \subseteq U \times V$. To connect it to the real-world example, let there be an actor-movie bipartite network where actors are the nodes in one bipartition and the movies are the nodes in the other. There will be an edge between an actor and a movie if the actor performs in the corresponding movie. 
Similarly, there can be a user-product bipartite network where users and the products serve as nodes in the two bipartitions (users in one bipartition and the products in the other). There will be an edge between a user and a product if the user uses the product.

Mining \textit{bipartite cohesive subgraphs}, such as $k$-wing, maximal biclique, $k$-bitruss, balanced maximal biclique, etc., have attracted a lot of attention due to its wide range of applications across various domains such as community detection \cite{abidi2020pivot,das2018shared}, finding highly collaborative research groups \cite{abidi2022maximising}, and personalized recommendations \cite{abidi2022searching}. A rectangle \cite{wang2014rectangle}, butterfly, or $4$-cycle \cite{aksoy2017measuring} (i.e., a $2\times 2$ complete bipartite subgraph) is the smallest subgraph that forms the basic building block of many bipartite cohesive subgraphs. In terms of importance, it is equivalent to a triangle (the basic building block of cohesive subgraphs) in a unipartite graph and serves as the smallest cohesion unit in a bipartite graph.

Therefore to understand and analyze the properties, e.g., \textit{clustering coefficient}, \textit{community structure}, etc., in bipartite graphs, butterfly counting has emerged as an important problem. The butterfly counting problem has been extensively studied for undirected bipartite graphs \cite{wang2019vertex,sanei2018butterfly,wang2022accelerated,sanei2019fleet,shi2020parallel,xu2022efficient}. Recently, the problem has been extended to the weighted bipartite graphs as well \cite{zhou2021butterfly}. 
Further, the butterfly subgraph structure is adapted for the signed bipartite graphs, and consequently, \textit{balanced butterfly} is introduced in \cite{derr2019balance}. A balanced butterfly in a signed bipartite graph differs from a butterfly in an unsigned bipartite graph in the sense that a balanced butterfly has an even number of negative edges. It is equivalent to the signed balanced triangle in signed unipartite graphs for classical balance theory, a key signed social theory. Singed balanced butterflies are critical to the properties of signed bipartite graphs. \textbf{To the best of our knowledge, we are the first to introduce the problem of balanced butterfly counting and propose an efficient and scalable algorithm for counting balanced butterflies in a signed bipartite graph}.

The motivation to study the problem of counting balanced butterflies originates from several real-world applications requiring knowledge of balanced butterflies in a signed bipartite network. Some of the important applications are as follows:
\begin{itemize}
    
    \item \textbf{Predicting presence of antagonistic groups.} Finding balanced cohesive subgraphs using a balanced butterfly as a motif, i.e., maximal balanced signed bicliques \cite{sun2022maximal}, in a senator-bill voting bipartite graph, helps to obtain a group of two types of senators who vote for and against (antagonistic ones) the bills. Such groups are considered to be balanced and ensure that any bill voted by such groups has been discussed thoroughly, which reduces the risk of the bill favouring a specific community. Since the problem of finding maximal balanced signed bicliques is NP-Hard, an efficient polynomial counting algorithm for balanced butterflies can be exploited to predict the presence of such antagonistic groups.
    \item \textbf{Validating balance theory in bipartite graphs.} Balance theory \cite{cartwright1956structural,heider1946attitudes} has proven to provide vast improvements in various graph problems such as modeling, measuring, and mining tasks related to signed networks when utilized in the form of triangles in unipartite graphs. 
    Recently, Derr et al. \cite{derr2019balance}, exploited the similarity between balanced triangles and balanced butterflies and established that balanced butterflies are found significantly more often than unbalanced ones in signed bipartite networks, hence counting such substructures is crucial. 
\end{itemize}

It is not straightforward to apply a state-of-the-art butterfly counting algorithm for solving our problem of counting balanced butterflies in a signed bipartite network because the state-of-the-art butterfly counting algorithm counts the number of connections. Still, for counting balanced butterflies, we need to check for the sign of the connections. Thus, we modify the state-of-the-art algorithm to facilitate checking the edges' signs to decide whether a butterfly is balanced. However, the major drawback of explicitly checking each butterfly is that many butterflies might not be balanced, but we spend time checking for that. Following this, we develop a novel bucketing-based technique that minimizes the amount of wasteful computation to the maximum extent possible. To summarize, our contributions are as follows:

\begin{enumerate}
    \item We formally define the problem of counting balanced butterflies in a signed bipartite network.
    \item We develop a non-trivial baseline \baseline based on the state-of-the-art butterfly counting algorithm~\cite{wang2022accelerated}. In this algorithm, instead of simply counting the number of connections, we focus on the sign of each connection to decide whether a butterfly is balanced or not and accordingly update the count.
    \item Next, we present an exact efficient, balanced butterfly counting algorithm \bucketing based on a novel bucketing concept. In this approach, we group the wedges into two buckets to apply the counting technique for computing a balanced butterfly count instead of explicitly checking for the sign of each connection. This provides orders of magnitude speedup over the baseline solution \baseline. For example, on real-world signed bipartite network \senate, \bucketing is \textbf{120x} faster than the baseline solution \baseline. We theoretically prove the correctness of \bucketing and analyze the time and space complexity of the proposed solution.
    \item We develop a parallel algorithm \parbucketing based on the efficient \bucketing algorithm with a theoretical guarantee that \parbucketing is \textit{work-efficient} with a \textit{logarithmic span} which is desirable for a scalable parallel algorithm.
    \item We empirically evaluate our algorithms to show that (1) \bucketing outperforms \baseline on most of the signed bipartite graph where the connections between two partitions are dense, and (2) \parbucketing provides upto \textbf{45x} speedup in a $64$ core system compared to the runtime in a $1$ core system. Additionally, we show a case study from a movie dataset to show how we can analyze real-world datasets using balanced butterflies.
\end{enumerate}

\remove{
The existing algorithms for butterfly counting in bipartite graphs cannot be exploited for balanced butterfly counting in bipartite graphs. 
Therefore, in this In this paper, we first propose a new algorithm for counting balanced bicliques in a signed bipartite graph. Secondly, we create a smart baseline algorithm by exploiting the current state-of-the-art algorithm for the butterfly
counting problem and introducing necessary modifications so as to address our problem. Thirdly, we introduce a novel \textit{bucket approach} to address the problem with an efficient parallel computing approach. The approach exploits the intrinsic properties of balanced butterflies to contain a specific number of positive/negative edges to prune away the unnecessary enumeration of unbalanced butterflies.
}

\begin{figure}[t]
    \centering
    \scalebox{0.13}{
    \begin{tikzpicture}
        
    
    

    \end{tikzpicture}
    }
   \scalebox{0.22}{
   
    \begin{tikzpicture}
        \node(p)[] at (-1,9.5) { \Huge Butterflies};
            \draw [, dashdotdotted] (-2.6,-1) -- (36.6,-1);
            \draw [, dashdotdotted] (36.6,-1) -- (36.6,9);
            \draw [, dashdotdotted] (36.6,9) -- (-2.6,9);
            \draw [, dashdotdotted] (-2.6,9) -- (-2.6,-1);
            
            \node(p)[] at (7.5,8.5) { \Huge Balanced butterflies};
            \draw [blue, dashdotdotted] (-2,-.5) -- (22.6,-.5);
            \draw [blue, dashdotdotted] (22.6,-.5) -- (22.6,8);
            \draw [blue, dashdotdotted] (22.6,8) -- (-2,8);
            \draw [blue, dashdotdotted] (-2,8) -- (-2,-.5);
            
            \node(p)[] at (31,8.5) { \Huge Unbalanced butterflies};
            \draw [orange, dashdotdotted] (23,-.5) -- (36,-.5);
            \draw [orange, dashdotdotted] (36,-.5) -- (36,8);
            \draw [orange, dashdotdotted] (36,8) -- (23,8);
            \draw [orange, dashdotdotted] (23,8) -- (23,-.5);
    \end{tikzpicture}
    \hspace{-1100pt}
    \begin{subfigure}[b]{0.25\textwidth}
        \centering
        \begin{tikzpicture}
            \foreach \name/ \x in {v_1/0}
                \node (\name) at ( \x,6) [] {\includegraphics[scale=0.05]{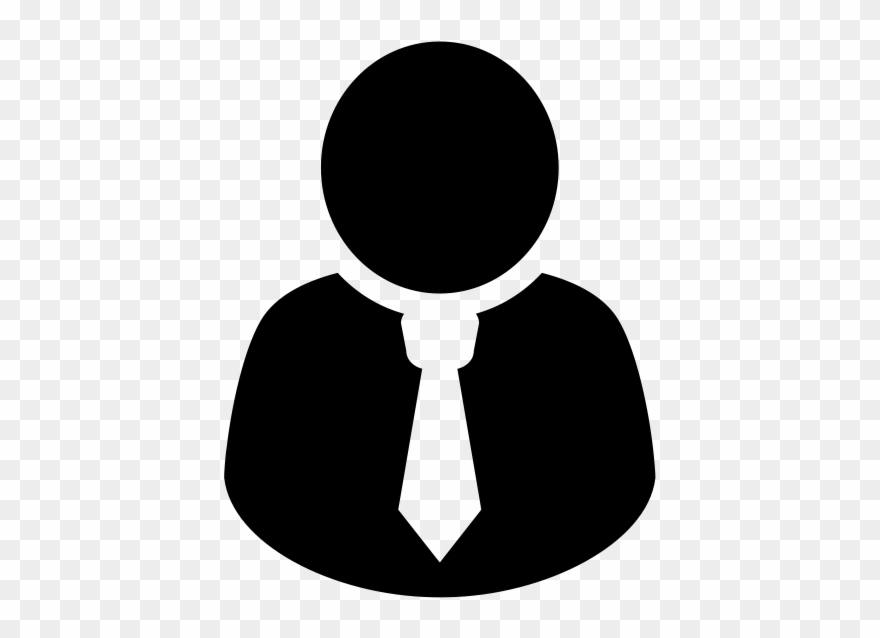} };
            \foreach \name/ \x in {v_1/0}
                \node () at (\x,7) [] {\Huge $v_i$};
            
            \foreach \name/ \x in {v_2/4}
                \node (\name) at ( \x,6) [] {\includegraphics[scale=0.05]{author_clipart.png} };
            \foreach \name/ \x in {v_2/4}
                \node () at (\x,7) [] {\Huge $v_j$};
            
            \foreach \name/ \x in {u_1/0}
                \node (\name) at ( \x,1.3) [] {\includegraphics[scale=0.05]{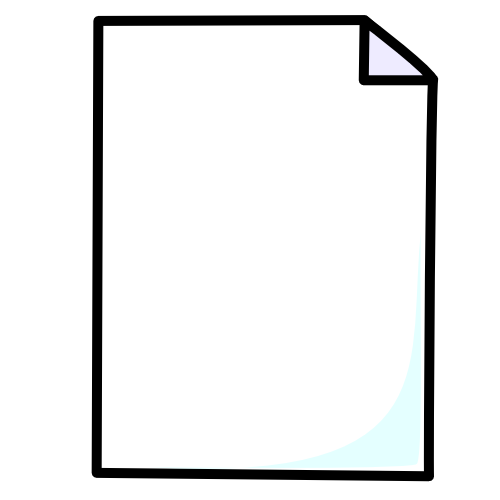} };
            \foreach \name/ \x in {u_1/0}
                \node () at (\x,.3) [] {\Huge $u_i$};
            
            \foreach \name/ \x in {u_2/4}
                \node (\name) at ( \x,1.3) [] {\includegraphics[scale=0.05]{paper_clip_art.png} };
            \foreach \name/ \x in {u_2/4}\node () at (\x,.3) [] {\Huge $u_j$};
        
            \node(p)[] at (-.5,4) {\Huge +};
            \node(p)[] at (1.45,5) {\Huge +};
            \node(p)[] at (2.55,5) {\Huge +};
            \node(p)[] at (4.5,4) {\Huge +};

            \path[]
            (v_1)
            edge  [very thick,green](u_1)
            edge  [very thick,green](u_2)
            
            (v_2) 
            edge  [very thick,green](u_1)
            edge  [very thick,green](u_2)
            ;
        \end{tikzpicture}
        
        \caption{\huge $ (++++)$}
        \label{fig:graphA}
    \end{subfigure}
    \hspace{1cm}
    \begin{subfigure}[b]{0.3\textwidth}
        \centering
        \begin{tikzpicture}
            \foreach \name/ \x in {v_1/0}
                \node (\name) at ( \x,6) [] {\includegraphics[scale=0.05]{author_clipart.png} };
            \foreach \name/ \x in {v_1/0}
                \node () at (\x,7) [] {\Huge $v_i$};
            
            \foreach \name/ \x in {v_2/4}
                \node (\name) at ( \x,6) [] {\includegraphics[scale=0.05]{author_clipart.png} };
            \foreach \name/ \x in {v_2/4}
                \node () at (\x,7) [] {\Huge $v_j$};
            
            \foreach \name/ \x in {u_1/0}
                \node (\name) at ( \x,1.3) [] {\includegraphics[scale=0.05]{paper_clip_art.png} };
            \foreach \name/ \x in {u_1/0}
                \node () at (\x,.3) [] {\Huge $u_i$};
            
            \foreach \name/ \x in {u_2/4}
                \node (\name) at ( \x,1.3) [] {\includegraphics[scale=0.05]{paper_clip_art.png} };
            \foreach \name/ \x in {u_2/4}\node () at (\x,.3) [] {\Huge $u_j$};
        
            \node(p)[] at (-.5,4) {\Huge +};
            \node(p)[] at (1.45,5) {\Huge +};
            \node(p)[] at (2.55,5) {\Huge -};
            \node(p)[] at (4.5,4) {\Huge -};
            
            \path[]
            (v_1)
            edge  [very thick,green](u_1)
            edge  [very thick,green](u_2)
            
            (v_2) 
            edge  [very thick,red](u_1)
            edge  [very thick,red](u_2)
            ;
        \end{tikzpicture}
        \caption{\huge $ (++--)$}
        \label{fig:graphB}
    \end{subfigure}
    \hspace{1cm}
    \begin{subfigure}[b]{0.3\textwidth}
        \centering
        \begin{tikzpicture}
            \foreach \name/ \x in {v_1/0}
                \node (\name) at ( \x,6) [] {\includegraphics[scale=0.05]{author_clipart.png} };
            \foreach \name/ \x in {v_1/0}
                \node () at (\x,7) [] {\Huge $v_i$};
            
            \foreach \name/ \x in {v_2/4}
                \node (\name) at ( \x,6) [] {\includegraphics[scale=0.05]{author_clipart.png} };
            \foreach \name/ \x in {v_2/4}
                \node () at (\x,7) [] {\Huge $v_j$};
            
            \foreach \name/ \x in {u_1/0}
                \node (\name) at ( \x,1.3) [] {\includegraphics[scale=0.05]{paper_clip_art.png} };
            \foreach \name/ \x in {u_1/0}
                \node () at (\x,.3) [] {\Huge $u_i$};
            
            \foreach \name/ \x in {u_2/4}
                \node (\name) at ( \x,1.3) [] {\includegraphics[scale=0.05]{paper_clip_art.png} };
            \foreach \name/ \x in {u_2/4}\node () at (\x,.3) [] {\Huge $u_j$};
        
            \node(p)[] at (-.5,4) {\Huge +};
            \node(p)[] at (1.45,5) {\Huge -};
            \node(p)[] at (2.55,5) {\Huge +};
            \node(p)[] at (4.5,4) {\Huge -};
            
            \path[]
            (v_1)
            edge  [very thick,green](u_1)
            edge  [very thick,red](u_2)
            
            (v_2) 
            edge  [very thick,green](u_1)
            edge  [very thick,red](u_2)
            ;
        \end{tikzpicture}
        \caption{\huge $ (+-+-)$}
        \label{fig:graphC}
    \end{subfigure}
    \hspace{1cm}
    \begin{subfigure}[b]{0.3\textwidth}
        \centering
        \begin{tikzpicture}
            \foreach \name/ \x in {v_1/0}
                \node (\name) at ( \x,6) [] {\includegraphics[scale=0.05]{author_clipart.png} };
            \foreach \name/ \x in {v_1/0}
                \node () at (\x,7) [] {\Huge $v_i$};
            
            \foreach \name/ \x in {v_2/4}
                \node (\name) at ( \x,6) [] {\includegraphics[scale=0.05]{author_clipart.png} };
            \foreach \name/ \x in {v_2/4}
                \node () at (\x,7) [] {\Huge $v_j$};
            
            \foreach \name/ \x in {u_1/0}
                \node (\name) at ( \x,1.3) [] {\includegraphics[scale=0.05]{paper_clip_art.png} };
            \foreach \name/ \x in {u_1/0}
                \node () at (\x,.3) [] {\Huge $u_i$};
            
            \foreach \name/ \x in {u_2/4}
                \node (\name) at ( \x,1.3) [] {\includegraphics[scale=0.05]{paper_clip_art.png} };
            \foreach \name/ \x in {u_2/4}\node () at (\x,.3) [] {\Huge $u_j$};
        
            \node(p)[] at (-.5,4) {\Huge -};
            \node(p)[] at (1.45,5) {\Huge -};
            \node(p)[] at (2.55,5) {\Huge -};
            \node(p)[] at (4.5,4) {\Huge -};
            
            \path[]
            (v_1)
            edge  [very thick,red](u_1)
            edge  [very thick,red](u_2)
            
            (v_2) 
            edge  [very thick,red](u_1)
            edge  [very thick,red](u_2)
            ;
        \end{tikzpicture}
        \caption{\huge $ (----)$}
        \label{fig:graphD}
    \end{subfigure}
    \hspace{1cm}
    \begin{subfigure}[b]{0.3\textwidth}
        \centering
        \begin{tikzpicture}
            \foreach \name/ \x in {v_1/0}
                \node (\name) at ( \x,6) [] {\includegraphics[scale=0.05]{author_clipart.png} };
            \foreach \name/ \x in {v_1/0}
                \node () at (\x,7) [] {\Huge $v_i$};
            
            \foreach \name/ \x in {v_2/4}
                \node (\name) at ( \x,6) [] {\includegraphics[scale=0.05]{author_clipart.png} };
            \foreach \name/ \x in {v_2/4}
                \node () at (\x,7) [] {\Huge $v_j$};
            
            \foreach \name/ \x in {u_1/0}
                \node (\name) at ( \x,1.3) [] {\includegraphics[scale=0.05]{paper_clip_art.png} };
            \foreach \name/ \x in {u_1/0}
                \node () at (\x,.3) [] {\Huge $u_i$};
            
            \foreach \name/ \x in {u_2/4}
                \node (\name) at ( \x,1.3) [] {\includegraphics[scale=0.05]{paper_clip_art.png} };
            \foreach \name/ \x in {u_2/4}\node () at (\x,.3) [] {\Huge $u_j$};
        
            \node(p)[] at (-.5,4) {\Huge +};
            \node(p)[] at (1.45,5) {\Huge -};
            \node(p)[] at (2.55,5) {\Huge +};
            \node(p)[] at (4.5,4) {\Huge +};
            
            \path[]
            (v_1)
            edge  [very thick,green](u_1)
            edge  [very thick,red](u_2)
            
            (v_2) 
            edge  [very thick,green](u_1)
            edge  [very thick,green](u_2)
            ;
        \end{tikzpicture}
        \caption{\huge $ (+-++)$}
        \label{fig:graphE}
    \end{subfigure}
    \hspace{1cm}
    \begin{subfigure}[b]{0.3\textwidth}
        \centering
        \begin{tikzpicture}
            \foreach \name/ \x in {v_1/0}
                \node (\name) at ( \x,6) [] {\includegraphics[scale=0.05]{author_clipart.png} };
            \foreach \name/ \x in {v_1/0}
                \node () at (\x,7) [] {\Huge $v_i$};
            
            \foreach \name/ \x in {v_2/4}
                \node (\name) at ( \x,6) [] {\includegraphics[scale=0.05]{author_clipart.png} };
            \foreach \name/ \x in {v_2/4}
                \node () at (\x,7) [] {\Huge $v_j$};
            
            \foreach \name/ \x in {u_1/0}
                \node (\name) at ( \x,1.3) [] {\includegraphics[scale=0.05]{paper_clip_art.png} };
            \foreach \name/ \x in {u_1/0}
                \node () at (\x,.3) [] {\Huge $u_i$};
            
            \foreach \name/ \x in {u_2/4}
                \node (\name) at ( \x,1.3) [] {\includegraphics[scale=0.05]{paper_clip_art.png} };
            \foreach \name/ \x in {u_2/4}\node () at (\x,.3) [] {\Huge $u_j$};
        
            \node(p)[] at (-.5,4) {\Huge -};
            \node(p)[] at (1.45,5) {\Huge +};
            \node(p)[] at (2.55,5) {\Huge -};
            \node(p)[] at (4.5,4) {\Huge -};
            
            \path[]
            (v_1)
            edge  [very thick,red](u_1)
            edge  [very thick,green](u_2)
            
            (v_2) 
            edge  [very thick,red](u_1)
            edge  [very thick,red](u_2)
            ;
        \end{tikzpicture}
        \caption{\huge $ (-+--)$}
        \label{fig:graphF}
    \end{subfigure}
    \hspace{1cm}
  }  
\caption{{\small{Possible singed balanced and unbalanced butterflies where green (red) edges denote positive (negative) relationship.}}}
\label{fig:my_label_motivation}
\vspace{-15pt}
\end{figure}

\section{Problem Definition}
In this paper, we consider an undirected signed bipartite graph $G = (U, V, E)$, where $U$ and $V$ are the two independent vertex sets and $E \subseteq U \times V$ represents the edge set. For $\forall e(u, v) \in E$, a signed label of either \quotes{+} or \quotes{-} is associated with it. The label \quotes{+} denotes the positive (negative) relationship between the vertices, e.g., like (dislike), trust (distrust), etc. We now define some of the basic terminologies for our problem.

\begin{definition}\textbf{Butterfly $(\butterfly)$.}
    Given a bipartite graph $G=(U,V,E)$ and four vertices $u_i, u_j\in U, v_i,v_j \in V $, a butterfly induced by $u_i,v_i,u_j,v_j$ is a cycle of length of $4$ consisting of edges $(v_i,u_i)$, $(u_i,v_j)$, $(v_j,u_j)$ and $(u_j,v_i)$ $\in$ $E$. 
\label{def:butterfly}
\end{definition}

\begin{definition}
    \textbf{Balanced butterfly $(\balancebutterfly)$ \cite{derr2019balance}.} For a given signed bipartite graph $G$ and a butterfly $B$ is said to be balanced if it contains an even number of negative edges.
\label{def:bal_butterfly}
\end{definition}
Now, according to the Definition \ref{def:bal_butterfly}, there can be $6$ non-isomorphic butterflies as shown in Figure \ref{fig:my_label_motivation}. Among them, the first $4$ ($a$ to $d$) butterflies are balanced, i.e., even number of negative edges in a butterfly. 
The last $2$ ($e$ and $f$) butterflies are unbalanced butterflies in Figure \ref{fig:my_label_motivation}. 
Note, we will be only using red and green colours to represent negative and positive edges throughout the paper.

Further, we also familiarize the concept of a Wedge \cite{wang2019vertex}, which we would be utilizing later for our \textit{bucket approach}.
\begin{definition}
\textbf{Wedge $(\lor)$.} Given a signed bipartite graph $G(U, V, E)$ and vertices $u, v, w \in \{U \cup V\}$. A path starting from $u$, going through $v$ and ending at $w$ is called a wedge which is denoted as $\lor(u, v, w)$. 
For a wedge $\lor(u, v, w)$, we call $u$ the start-vertex, $v$ the middle-vertex and $w$ the end-vertex.
\label{def:wedge}
\end{definition}
Now, we are ready to define the problem statement as follows.

\vspace{5pt}
\noindent\textbf{Problem statement.} Given a signed bipartite graph $G=(U, V, E)$, the \textit{balanced butterfly counting problem} is defined as to determine the total number of balanced butterflies in $G$.

\section{Baseline Approach}\label{sec:baseline}
In this section, we propose the baseline algorithm for the balance butterfly counting problem. Since there does not exist any specific baseline solution to address our problem of counting balance butterfly, the naive approach for a baseline would be to enumerate all possible combinations of four vertices forming a butterfly and validate if the resulting butterfly is balanced. Although determining whether a butterfly is balanced or not requires constant time, exhaustive enumeration of all butterflies is very time-consuming, i.e., $O(|E|^4)$. Fortunately, there are a reasonable number of approaches to the problem of butterfly counting and its variations. Therefore, we propose a non-trivial baseline, where we modify the state-of-the-art butterfly counting algorithm to enumerate all butterflies and determine the eligible butterflies among them.
\subsection{Baseline Algorithm}
We now present our baseline that is inspired by vertex priority butterfly counting algorithm \cite{wang2019vertex}. The motivation for this approach was to prioritize all the vertices in both vertex sets, so as to reduce the overall number of wedges processed by the algorithm significantly. The correctness and theoretical reduction in running time complexity of the algorithm were also provided, which we discuss later. We start by defining the vertex priority for each vertex is defined as follows:
\begin{definition}
Vertex Priority \cite{wang2019vertex}: The vertex priority of any vertex $a \{\in U \cup V\} $ in comparison to any other vertex $b \in \{ U \cup V\} $ is defined as
$p(a)>p(b)$ if:
\begin{enumerate}
    \item $deg(a)$ $>$ $deg(b)$
    \item $id(a)>id(b)$, if $deg(a)$ $=$ $deg(b)$
\end{enumerate}
where $id(a)$ is the vertex ID of $a$.
\end{definition}
\begin{algorithm}[t]
\small
	\SetKwInOut{Input}{Input}
	\SetKwInOut{Output}{Output}
	\Input{$G$$(U,V,E)$, Input signed bipartite graph.}
	\Output{$\balancebutterfly$,  number of balanced butterflies.}
	\SetKwFunction{FMain}{Balance\_Butterfly}
    \SetKwProg{Fn}{Function}{}{}
    \Fn{\FMain{$G$}}{
        Calculate $p(u)$ for each $u \in \{U \cup V\}$\\
        Rearrange $\Gamma(u)$ for each $u \in \{U \cup V\}$ in ascending order\\
       $\balancebutterfly \gets 0$\\
       \ForEach{$u \in \{U \cup V\}$}{
       Initialise $H(w)$ for storing wedges corresponding to $v$\\
        \ForEach{$v \in \Gamma(u)$ $|$ $p(v)<p(u)$}{
        \ForEach{$w \in \Gamma(v)$ $|$ $p(w) < p(u)$}{
        $H(w).add(v)$
        }
        }
        \ForEach{$w$ $|$ $H(w)>1$}{
        \ForEach{vertex pair $(v_1,v_2)$ $\in H(w)$ $|$ $v_1 \neq v_2$}{
        \If{$\butterfly_{u,v_1,w,v_2}$.is\_Balanced}{
        $\balancebutterfly++$
        }
        }
        }
        }
  }
\caption{\textbf{\baseline}: Baseline approach}
\label{Algorithm:Baseline}
\end{algorithm}
Algorithm \ref{Algorithm:Baseline} describes the adaptation of the butterfly counting approach in \cite{wang2019vertex}. The key difference in our adaptation is that \cite{wang2019vertex} inputs all unsigned edges, which are indistinguishable and hence count all the possible butterflies, including the unbalanced edges. Moreover, \cite{wang2019vertex} treated all the butterflies equally (unsigned) hence saving the number of wedges was sufficient to calculate the total number of butterflies, i.e., for $k$ number of wedges with the same starting and ending vertices the total number of butterflies are given as $\Mycomb[k]{2}$. Therefore we modify the approach by storing the edges instead of just counting, which is then used to determine whether the butterfly is balanced or not. 

The algorithm starts by calculating the priority of all the vertices in $G$ (line 2). The sorting (ascending) of vertices is then performed based on the vertex priority (line 3). This allows to only processes the wedges where start-vertices have higher priorities than middle and end vertices, thus,  avoiding redundant wedges computation. The balanced butterfly is counted for each vertex $u \in \{U \cup V\}$ (lines 5-13). For each vertex $u$ a Hashmap $H$ is initialized to store all the corresponding wedges. Each corresponding wedge $(u,v,w)$ is enumerated such that $u$ and $w$ are $2$-hop neighbours and the $u$ has the highest priority. For each eligible wedge, the vertex $v$ is added to the list $H(w)$ in the Hashmap $H$. Consequently, for each vertex $u$, $H$ contains all the wedges. After that, for each combination of two vertices $v_1,v_2 \in H(w)$, corresponding to $u$, we check if the butterfly $\butterfly_{u,v_1,w,v_2}$ is balanced (lines 10-13). If so, we increment the count $\balancebutterfly$. Note that we can easily verify whether a butterfly is balanced or not within constant time.

\subsection{Complexity Analysis}
In this subsection, we examine the time and space complexities of our baseline algorithm.

\noindent\textbf{Time complexity.} We inspect the complexity by dividing the algorithm into two phases: (i) Initialization. (ii) Enumeration. For the initialization phase, the calculation of priority and sorting of the vertices can be performed in $O(|U|+|V|)$. The second phase of the enumeration of all balanced butterflies is the dominating part. In this phase, we start by computing all the wedges for an edge containing vertices $u$ and $v$, which requires $O(min\{def(u),deg(v)\})$. Due to the vertex priority order, we know that the start vertex of any wedge we discover must have a degree of size greater than or equal to the degree of the mid and end vertex. Therefore, the total number of wedges that can be uncovered by our algorithm containing a start vertex $u$ and an end vertex $w$ is $O(deg(w))$, which means the total combination of two wedges containing $u$ and $w$ is $deg(w)^2$. As a result, the cost of determining all wedge combinations of node $u$ and $w$ is $O(deg(w)^2$. Thus, the time complexity of our algorithm is $O(\sum_{e(u,v)\in E}min\{deg(u)^2,deg(v)^2\})$


\noindent\textbf{Space Complexity.} As discussed earlier the wedges enumerated by using the start vertex $u$ need to be stored. Therefore we can easily conclude the space required is the possible number of edges involving $u$, i.e., $O(deg(u))$.
\section{Bucket Approach}\label{sec:bucket}

In this section, we will be developing an efficient algorithm for counting balanced butterflies that minimizes enumerating unbalanced butterflies as much as possible. Our idea is based on first bucketing (grouping) the wedges in some way and then counting the balanced butterflies by combining the wedges from the buckets in a systematic way. We first propose some new terminologies utilized later in our approach.
\begin{definition}
\textbf{Symmetric Wedge $(\lor^s)$.} 
Given a wedge $\lor(u,v,w)$ in a bipartite graph $G = (U, V, E)$, a symmetric wedge is a wedge where both edges in the wedge, $(u,v)$ and $(v,w)$, have the same sign, either both positive or both negative and is denoted as $\lor^s(u,v,w)$.
\label{def:wedgesym}
\end{definition}
\begin{definition}
\textbf{Asymmetric Wedge $(\lor^{a})$.} 
Given a wedge $\lor(u,v,w)$ in a bipartite graph $G = (U, V, E)$, an asymmetric wedge is a wedge where both edges in the wedge, $(u,v)$ and $(v,w)$, have the opposite sign and is denoted as $\lor^{a}(u,v,w)$.
\label{def:wedgeasym}
\end{definition}

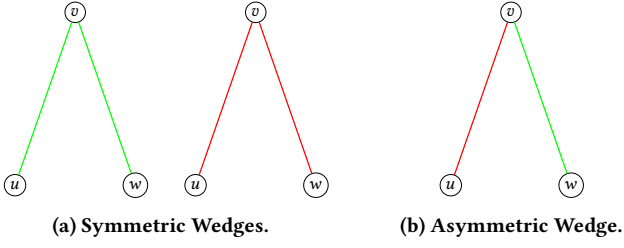
\begin{figure}[t]
  \centering
  \begin{minipage}{0.5\linewidth}
    \centering
    \scalebox{0.4}{
    \begin{tikzpicture}
      \node [circle,draw](v_1) at (2,7) {\Huge $v$ };
      \node [circle,draw](u_1) at (0,1.25) {\Huge $u$ };
      \node [circle,draw](u_2) at (4,1.25) {\Huge $w$ };
      \node [circle,draw](v_2) at (8,7) {\Huge $v$ };
      \node [circle,draw](u_3) at (6,1.25) {\Huge $u$ };
      \node [circle,draw](u_4) at (10,1.25) {\Huge $w$ };
      \path[]
        (v_1) edge [very thick,green] (u_1)
              edge [very thick,green] (u_2)
        (v_2) edge [very thick,red] (u_3)
              edge [very thick,red] (u_4);
    \end{tikzpicture}
    }
    \subcaption{Symmetric Wedges.}\label{subfig:sywed}
  \end{minipage}
  \hfill
  \begin{minipage}{0.4\linewidth}
    \centering
    \scalebox{0.4}{
    \begin{tikzpicture}
      \node [circle,draw](v_1) at (14,7) {\Huge $v$ };
      \node [circle,draw](u_1) at (12,1.25) {\Huge $u$ };
      \node [circle,draw](u_2) at (16,1.25) {\Huge $w$ };
      \path[]
        (v_1) edge [very thick,red] (u_1)
              edge [very thick,green] (u_2);
    \end{tikzpicture}
    }
    \subcaption{Asymmetric Wedge.}\label{subfig:asywed}
  \end{minipage}
  \caption{Unique types of wedges.}
  \label{fig:types_of_wedges}
\end{figure}
Fig. \ref{subfig:sywed} and \ref{subfig:asywed} represent symmetric and asymmetric wedges, respectively. The motivation behind introducing these wedges in a bipartite graph is to generate all types of balanced and unbalanced butterflies in  Fig. \ref{fig:my_label_motivation}. We further propose a lemma to establish the significance of these wedges.
\begin{lemma}
Any balanced butterfly in a signed bipartite graph can be constructed using a combination of symmetric and asymmetric wedges.
\label{lemma:wedges}
\end{lemma}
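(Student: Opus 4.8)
The plan is to exploit the fact that every butterfly decomposes canonically into two wedges, and to track the \emph{parity} of negative edges through this decomposition. First I would observe that the butterfly $\butterfly_{u,v_1,w,v_2}$ of Definition~\ref{def:butterfly} is precisely the edge-union of the two wedges $\lor(u,v_1,w)$ and $\lor(u,v_2,w)$, which share start-vertex $u$ and end-vertex $w$ and whose middle-vertices are the two remaining corners. Conversely, any two distinct wedges with a common start- and end-vertex glue together into a butterfly. Thus ``constructing a butterfly from wedges'' is well-defined, unique up to which diagonal pair of corners is taken as $\{u,w\}$, and the two wedges partition the four edges of the butterfly.

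Next I would assign each edge a sign value in $\{+1,-1\}$ and define, for a wedge, the \emph{sign product} of its two edges. By Definitions~\ref{def:wedgesym} and~\ref{def:wedgeasym} this product equals $+1$ exactly for a symmetric wedge $\lor^s$ (both edges $+$ or both $-$) and $-1$ exactly for an asymmetric wedge $\lor^{a}$; moreover these two classes are exhaustive and mutually exclusive, so every wedge is of exactly one type. Equivalently, in terms of negative-edge counts, a symmetric wedge carries an even number ($0$ or $2$) of negative edges while an asymmetric wedge carries exactly one.

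The key step is then a parity computation. Since the two wedges partition the four edges of the butterfly, the total number of negative edges in the butterfly is the sum of the negative-edge counts of its two wedges, and the product of all four edge signs equals the product of the two wedge sign products. By Definition~\ref{def:bal_butterfly} the butterfly is balanced iff its number of negative edges is even, i.e. iff the product of the four edge signs is $+1$. This occurs exactly when the two wedge sign products agree --- either both $+1$ or both $-1$ --- which is to say exactly when \emph{both wedges are symmetric or both wedges are asymmetric}. Hence every balanced butterfly is assembled from a same-type pair of wedges (and every mixed pair yields an unbalanced butterfly), which establishes Lemma~\ref{lemma:wedges}.

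I would close by matching this characterization against Figure~\ref{fig:my_label_motivation}: the four balanced patterns $(++++),(++--),(+-+-),(----)$ should each realize as a symmetric--symmetric or asymmetric--asymmetric pairing, while the two unbalanced patterns $(+-++),(-+--)$ arise from mixed pairings, giving an explicit finite sanity check. I do not expect a genuine obstacle here: the statement as phrased is essentially a bookkeeping fact about sign parity, and the only care needed is (i) making the wedge decomposition of a butterfly precise, and (ii) reading ``a combination of symmetric and asymmetric wedges'' as the sharpened claim that balanced butterflies correspond precisely to \emph{same-type} wedge pairs, since it is this refined form --- not the literal statement --- that the \bucketing algorithm will subsequently rely on.
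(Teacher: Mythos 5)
Your proposal is correct and takes essentially the same route as the paper: both decompose the butterfly into two wedges sharing the diagonal vertex pair $\{u,w\}$ and establish the sharpened claim that the butterfly is balanced exactly when its two wedges have the same type (both symmetric or both asymmetric). The only difference is presentational --- the paper argues by case analysis on the type of the first wedge, whereas you compress the cases into a single sign-product parity identity, which is tidier but not a genuinely different argument.
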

\begin{proof}
Let $ G = (U, V, E)$ be a signed bipartite graph, and let $\butterfly_{u,v,w,x}$ be a balanced butterfly in $G$, where $u,w \in U$ and $v,x \in V$. We will prove that $\butterfly_{u,v,w,x}$ can be constructed using a combination of symmetric and asymmetric wedges.

A butterfly is composed of two wedges, with the following case scenarios: Case $1$. Let the two edges $(u,v)$ and $(v,w)$ be of the same sign; then they form a symmetric wedge $\lor^s(u,v,w)$. Now for $\butterfly_{u,v,w,x}$ to be balanced needs the wedge $\lor^s(v,w,x)$ to be symmetric, i.e., edges $(w,x)$ and $(x,u)$ should be of the same sign. 

Case $2$. Assume that $(u,v)$ is positive and $(v,w)$ is negative, then they form an asymmetric wedge $\lor^a(u,v,w)$. Thus we can construct a balanced butterfly $\butterfly_{u,v,w,x}$ only if the wedge $\lor^s(v,w,x)$ is asymmetric, i.e., edges $(w,x)$ and $(x,u)$ should be of the opposite sign. 

Therefore, any butterfly constructed using a pair of symmetrical or asymmetrical wedges will always result in a balanced butterfly.
\end{proof} 
Further, we propose another lemma that forms the crux of our bucket approach.
\begin{lemma}
Given a signed bipartite graph $G=(U,V,E)$, for any pair of vertex $a \in U \cup V$ and $c$ (its $2$-hop neighbour), the total number of balanced butterflies containing $a$ and $c$ is given as $\Mycomb[l]{2}+\Mycomb[m]{2}$, where $l$ and $m$ are the numbers of symmetric and asymmetric wedges respectively containing $a$ and $c$. 
\label{lemma:total_balanced_butterflies}
\end{lemma}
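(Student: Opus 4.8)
The plan is to set up a bijection between the butterflies whose non-adjacent (diagonal) pair is $\{a,c\}$ and the unordered pairs of distinct wedges sharing $a$ and $c$ as their start- and end-vertices, and then to count, among those pairs, exactly the ones that Lemma~\ref{lemma:wedges} certifies as balanced. First I would fix $a$ together with its $2$-hop neighbour $c$ (so $a$ and $c$ lie in the same partition) and note that every wedge of the form $\lor(a,b,c)$ is determined by a common neighbour $b$ of $a$ and $c$ in the opposite partition. By Definitions~\ref{def:wedgesym} and~\ref{def:wedgeasym} each such wedge is of exactly one of two mutually exclusive types, so the $l+m$ wedges through $a$ and $c$ split cleanly into the $l$ symmetric ones and the $m$ asymmetric ones.

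Next I would make the bijection precise. A butterfly containing $a$ and $c$ as its diagonal is a $4$-cycle $a - b_1 - c - b_2 - a$ determined by two \emph{distinct} middle vertices $b_1 \neq b_2$ that are common neighbours of $a$ and $c$; conversely, any unordered pair $\{b_1,b_2\}$ of distinct common neighbours reconstructs such a butterfly uniquely. Hence the butterflies through $a$ and $c$ are in one-to-one correspondence with unordered pairs of distinct wedges $\{\lor(a,b_1,c),\,\lor(a,b_2,c)\}$ sharing endpoints $a$ and $c$, which reduces the problem to counting wedge pairs.

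Then I would characterise which pairs yield \emph{balanced} butterflies by a parity argument that is exactly the content of Lemma~\ref{lemma:wedges}. Writing the number of negative edges of the assembled butterfly as the sum of the negatives contributed by each of its two wedges, a symmetric wedge contributes an even number ($0$ or $2$) and an asymmetric wedge contributes an odd number ($1$); therefore the total is even---i.e.\ the butterfly is balanced---if and only if the two wedges share the same type. Counting same-type pairs gives $\Mycomb[l]{2}$ pairs drawn from the symmetric wedges and $\Mycomb[m]{2}$ pairs drawn from the asymmetric wedges, for a total of $\Mycomb[l]{2}+\Mycomb[m]{2}$, as claimed.

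The counting step is elementary once the characterisation is in place, so the delicate point---and the part I would state explicitly---is that the characterisation must be an \emph{if and only if}. Lemma~\ref{lemma:wedges} supplies the ``if'' direction (same-type pairs are balanced), but soundness of the count also requires the converse, namely that a mixed pair (one symmetric, one asymmetric) is necessarily \emph{unbalanced}; otherwise we could either overcount or omit balanced butterflies. The parity bookkeeping above settles both directions uniformly, and I expect making that converse airtight to be the main obstacle, since it is exactly what guarantees the $\binom{l}{2}+\binom{m}{2}$ decomposition exhausts and partitions the balanced butterflies through $a$ and $c$.
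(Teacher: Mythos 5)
Your proposal is correct and follows essentially the same route as the paper: decompose each butterfly through $a$ and $c$ into its two wedges, use the type-matching characterisation (Lemma~\ref{lemma:wedges}) to identify balanced butterflies with same-type wedge pairs, and count $\Mycomb[l]{2}+\Mycomb[m]{2}$. Your explicit parity bookkeeping and the explicit bijection with unordered pairs of distinct common neighbours merely make airtight what the paper distributes informally between Lemma~\ref{lemma:wedges} and this proof; it is not a different argument.
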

\begin{proof}
Let $G$ be a signed bipartite graph, and let $\butterfly_{a,b,c,d}$ be a balanced butterfly in $G$. Using Lemma \ref{lemma:wedges}, we observe that for $\butterfly_{a,b,c,d}$ to be balanced, it must select two wedges of the same type, i.e., either two symmetric or two asymmetric wedges (Lemma \ref{lemma:wedges}). Therefore, we can select a pair of eligible symmetric or asymmetric wedges containing vertices $a$ and $c$ to form a balanced butterfly.

Let the number of symmetric and asymmetric wedges containing $a$ and $c$ be $l$ and $m$ respectively. The corresponding number of balanced butterflies is computed as the number of possible combinations of symmetric wedges and the number of possible combinations of asymmetric wedges.

The number of possible combinations of symmetric and asymmetric wedges containing vertices $a$ and $c$ is given by $\Mycomb[l]{2}$ and $\Mycomb[m]{2}$ respectively. Therefore, the total number of balanced butterflies containing vertices $a$ and $c$ is given as $\Mycomb[l]{2}+\Mycomb[m]{2}$.


\end{proof}

\subsection{Algorithm}
Suppose we want to generate a balanced $\butterfly_{u,v_1,w,v_2}$ where the priority of the vertex $u$ is the highest. For generating this butterfly, we will be using the ordered wedges from the buckets in some specific order.
\begin{algorithm}[t]
\small
	\SetKwInOut{Input}{Input}
	\SetKwInOut{Output}{Output}
	\Input{$G$$(U,V,E)$, Input signed bipartite graph.}
	\Output{$\balancebutterfly$,  number of balanced butterflies.}
	\SetKwFunction{FMain}{Balance\_Butterfly}
    \SetKwProg{Fn}{Function}{}{}
    \Fn{\FMain{$G$}}{
        Calculate $p(u)$ for each $u \in \{U \cup V\}$\\
        Rearrange $\Gamma(u)$ for each $u \in \{U \cup V\}$ in ascending order\\
       $\balancebutterfly \gets 0$\\
       $B_1\gets 0$\;
       $B_2\gets 0$\;
       \ForEach{$u \in \{U \cup V\}$}{
       Initialise $H(w)$ for storing wedges corresponding to $v$\\
        \ForEach{$v \in \Gamma(u)$ $|$ $p(v)<p(u)$}{
        \ForEach{$w \in \Gamma(v)$ $|$ $p(w) < p(u)$}{
        \If{$\sigma(u,v) == \sigma(v,w)$}{
        $B_1[w]++$\;
        }
        \Else{
        $B_2[w]++$\;
        }
        }
        }
        \ForEach{$w\in B_1$}{
            $\balancebutterfly \gets\balancebutterfly + \binom{B_1[w]}{2}$    
        }
        \ForEach{$w\in B_2$}{
            $\balancebutterfly \gets\balancebutterfly + \binom{B_2[w]}{2}$    
        }
        }
  }
\caption{\textbf{\bucketing}: Bucket Approach}
\label{Algorithm:Bucket}
\end{algorithm}
\begin{theorem}
Algorithm \ref{Algorithm:Bucket} correctly counts all the balanced butterflies.
\end{theorem}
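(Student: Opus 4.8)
The plan is to establish correctness by showing that Algorithm~\ref{Algorithm:Bucket} computes, for every vertex $u$ and every valid end-vertex $w$, exactly the quantity $\Mycomb[l]{2}+\Mycomb[m]{2}$ guaranteed by Lemma~\ref{lemma:total_balanced_butterflies}, and that summing these over all $(u,w)$ pairs counts each balanced butterfly exactly once. First I would fix an arbitrary vertex $u$ of highest priority and argue that the inner double loop over $v\in\Gamma(u)$ and $w\in\Gamma(v)$, restricted by $p(v)<p(u)$ and $p(w)<p(u)$, enumerates precisely the set of wedges $\lor(u,v,w)$ whose start-vertex is $u$ and whose middle and end vertices have strictly lower priority. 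This is the same wedge-enumeration scheme already justified for the baseline, so I would invoke that correctness directly rather than re-deriving it.

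Next I would verify the bucketing step. For each enumerated wedge, the test $\sigma(u,v)==\sigma(v,w)$ distinguishes symmetric from asymmetric wedges by Definitions~\ref{def:wedgesym} and~\ref{def:wedgeasym}; the symmetric wedges increment $B_1[w]$ and the asymmetric ones increment $B_2[w]$. Hence after the double loop, for the fixed $u$ we have $B_1[w]=l$ and $B_2[w]=m$, where $l$ and $m$ are exactly the numbers of symmetric and asymmetric wedges with endpoints $u$ and $w$ as in Lemma~\ref{lemma:total_balanced_butterflies}. The two accumulation loops then add $\binom{B_1[w]}{2}+\binom{B_2[w]}{2}=\Mycomb[l]{2}+\Mycomb[m]{2}$ to $\balancebutterfly$, which by Lemma~\ref{lemma:total_balanced_butterflies} is exactly the number of balanced butterflies whose unique highest-priority corner is $u$ and whose opposite corner is $w$.

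Finally I would argue the global count is correct via a no-double-counting / no-omission pairing. Every balanced butterfly $\butterfly_{u,v_1,w,v_2}$ has a well-defined highest-priority vertex; by the strict priority ordering this vertex is unique, and the opposite corner $w$ (its $2$-hop partner) is likewise determined. Thus each balanced butterfly is attributed to exactly one $(u,w)$ pair and is counted in exactly one term of the outer sum, while conversely every term counts only genuine balanced butterflies by Lemma~\ref{lemma:wedges}. Summing over all $u$ therefore yields the total balanced butterfly count with no butterfly omitted and none counted twice.

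The main obstacle I anticipate is the clean handling of the priority tie-breaking to guarantee that the highest-priority corner of each butterfly is genuinely unique; this relies on the vertex-priority definition using $id$ as a tiebreak so that no two vertices share a priority, and I would make this uniqueness explicit before invoking the pairing argument, since it is what prevents a butterfly from being counted under two different start vertices $u$.
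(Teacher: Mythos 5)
Your proof is correct and follows essentially the same route as the paper's: both rest on Lemma~\ref{lemma:wedges} and Lemma~\ref{lemma:total_balanced_butterflies} applied per $(u,w)$ pair, combined with the vertex-priority ordering to ensure each balanced butterfly is counted exactly once. In fact, your explicit attribution of each butterfly to its unique highest-priority corner (made unique by the $id$ tiebreak) is a more careful rendering of the step the paper's proof only gestures at with ``each pair of a vertex, $2$-hop neighbors, is traversed only once,'' and it also correctly restricts the wedge counts $l,m$ to the priority-filtered wedges rather than all wedges between $u$ and $w$.
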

\begin{proof}
The main purpose of Algorithm \ref{Algorithm:Bucket} is to eliminate the need for enumerating unbalanced butterflies in lines $10$ to $13$. From Lemma \ref{lemma:wedges}, we conclude that all the combinations of one symmetric and one asymmetric wedge are an unbalanced butterfly and, therefore, redundant. Furthermore, Lemma \ref{lemma:total_balanced_butterflies} allows us to calculate the total number of balanced butterflies per each pair of vertex $u$ and $w$ ($2$-hop neighbor). Also, ensuring each pair of a vertex, $2$-hop neighbors, is traversed only once by exploiting the vertex priority implies all the balanced butterflies are only counted once. Therefore for each $2$-hop neighbor vertex pair $u,w \in {U \cup V}$, all the wedges are enumerated and unlike Algorithm \ref{Algorithm:Baseline}, Algorithm \ref{Algorithm:Bucket} keeps track of the total number of wedges separately in two distinct buckets, i.e., $B_1$ for symmetric wedges and $B_2$ for asymmetric wedges, and utilizes Lemma \ref{lemma:total_balanced_butterflies} in lines $15$ to $18$ to accurately calculate $\balancebutterfly$. 
\end{proof}

\subsection{Complexity Analysis}

\noindent\textbf{Time Complexity:}~To analyze the time complexity, we observe that The algorithm works in two stages. The first stage is the preprocessing stage, where we compute the priorities of the vertices and sort the neighbors of each vertex in the adjacency list. Note that the sorting of the vertices takes $O(n)$ time using the binning technique~\cite{k-core-single-pc}, and the priority computation based on degree takes $O(1)$ time. The second stage is the counting stage, where we count the balanced butterflies by iterating over each vertex in the graph. Consider iteration over any vertex $u$ and its neighbor $v$. For this edge $(u,v)$, we work over the neighbors of $v$ only if $p(v) < p(u)$. Because we use only degree-based priority computation, the cost of processing an edge $(u,v)$ takes $O(\mbox{min}\{deg(u), deg(v)\})$. So the overall time complexity of the algorithm is $O(\Sigma_{e = (u,v)\in E}\mbox{min}\{deg(u), deg(v)\})$.\\

\noindent\textbf{Space Complexity:}~Apart from storing the graph, we use two additional data structures $B_1$ an $B_2$ for the maintenance of symmetric and asymmetric wedges. Now, the total number of entries in $B_1$ and $B_2$ can be at most $|U|$ if $u$ is from $U$, or it can be at most $|V|$ if $u$ is from $V$. Thus, the overall space complexity is $O(\mbox{max}\{|U|, |V|\})$.

\remove{
\section{Sampling Approaches}

\subsection{Vertex Sampling}
The algorithm described in this section aims to randomly select a vertex, $v$, uniformly at random from all vertices in the graph, and determine the number of balanced butterflies that include $v$, by using the proposed local counting algorithm \textsc{vBBFC} (Algorithm \ref{vBBFC}) - by counting the number of balanced butterflies in the induced subgraph in the 2-hop neighborhood of $v$ in the original graph. We show that the \textsc{VSamp} (Algorithm~\ref{Algorithm:Vsamp}) yields an unbiased estimate of total number of balanced butterflies in $G$.

\begin{algorithm}[t]
\small
	\SetKwInOut{Input}{Input}
	\SetKwInOut{Output}{Output}
	\Input{$G$$(U,V,E)$, Input signed bipartite graph.}
	\Output{An estimate for the total number of balanced butterflies in $G$.}
    sample a vertex $v$ uniformly at random;\\
    $\hourglass_v \gets vBBFC(v,G)$; \tcc{Algorithm \ref{vbbfc}}
    return $\hourglass_v \cdot \beta$;
 \caption{Vertex Sampling}
 \label{Algorithm:Vsamp}
\end{algorithm}

\subsection{Edge Sampling}
The algorithm described in this section aims to randomly select an edge, $e$, uniformly at random from all edges in the graph, and determine the number of balanced butterflies that include $e$, by utilizing the proposed local counting algorithm \textsc{eBBFC} (Algorithm \ref{ebbfc}). The algorithm considers the observation that each edge will be counted four times: each balanced butterfly that includes $e$ contributes to the count of four edges. Therefore, to estimate the total number of balanced butterflies in the graph accurately, we need to divide the sum of balanced butterfly counts per edge by 4, which gives us the factor of $1/4$ in the formula.
Hence, we deduce the relation $\hourglass = \frac{1}{4}\times$ $\sum_{i=1}^{n} \hourglass_i$. 
The resulting approach is presented as \textsc{ESamp} in Algorithm \ref{Algorithm:Esamp}, along with its corresponding characteristics. 

The expected value and variance of the number of balanced butterflies in the bipartite graph further support our edge sampling approach. These statistical measures provide crucial information about the central tendency and variability of the distribution of balance butterflies per edge. The expected value of our sampling algorithm estimates the average number of balance butterflies per edge, multiplied by a factor $\alpha$, which provides the total number of balanced butterflies in a bipartite graph. We also provide an upper bound for the variance, indicating the degree of variability in the counts of balance butterflies across different edges. The expected value and variance are as follows.

\begin{lemma}
Let $Y$ denote the return value of our Algorithm \ref{Algorithm:Esamp}. The corresponding expected value of $E[Y]=\hourglass$.
\end{lemma}
\begin{proof}
Let $e$ be a random edge selected uniformly at random from the bipartite graph $G = (U, V, E)$, where $|U| = m$ and $|V| = n$. Let $p$ and $q$ be the number of positive and negative edges, i.e., $p+q=m$, and $\hourglass_e$ be the number of balanced butterflies containing $e$, let $X_e$ be an indicator random variable which is $1$ is the $e$ is contained in $\hourglass_e$, $0$ otherwise, 
and let $p(\hourglass_e)$ be the probability $\hourglass_e$ being a balanced butterfly for a given edge $e$. In contrast to unsigned butterflies, where the probability of each butterfly containing $4$ edges is $4/m$ for all the edges, $p(\hourglass_e)$ is not straightforward. There are two scenarios based on the sign of $e$: (i) if $e$ is positive, then $p(\hourglass_e)=\frac{p \cdot (\Mycomb[p-1]{1}\cdot \Mycomb[q]{2}+\Mycomb[p-1]{3})}{m}$, (ii) if $e$ is negative, then $p(\hourglass_e)=\frac{q \cdot (\Mycomb[q-1]{1}\cdot \Mycomb[p]{2}+\Mycomb[q-1]{3})}{m}$.

Therefore, the expected value of the number of balanced butterflies in the graph $G$ can be expressed as:

\begin{align*}
\mathbb{E}[\hourglass_G] &= \sum_{e \in E^+} \mathbb{E}[\hourglass_e] +\sum_{e \in E^-} \mathbb{E}[\hourglass_e] 
\\
&= \sum_{e \in E^+} p(\hourglass_e) \times X_e +\sum_{e \in E^-} p(\hourglass_e) \times X_e  \\
&= \sum_{e \in E^+} \frac{p \cdot (\Mycomb[p-1]{1}\cdot \Mycomb[q]{2}+\Mycomb[p-1]{3})}{m} + \sum_{e \in E^-}\frac{q \cdot (\Mycomb[q-1]{1}\cdot \Mycomb[p]{2}+\Mycomb[q-1]{3})}{m} \\
&= \sum_{e \in E^+\cup E^-} \frac{p \cdot (\Mycomb[p-1]{1}\cdot \Mycomb[q]{2}+\Mycomb[p-1]{3})+q \cdot (\Mycomb[q-1]{1}\cdot \Mycomb[p]{2}+\Mycomb[q-1]{3})}{m} \\
&= \alpha \cdot \sum_{e \in E^+\cup E^-}1 = \alpha \cdot \hourglass_G,
\end{align*}

where $\hourglass_G$ is the total number of balanced butterflies in the graph, $w(b)$ is the weight of a balanced butterfly $\hourglass_e$, and $\alpha = \frac{\Mycomb[p-1]{1}\cdot \Mycomb[q]{2}+\Mycomb[p-1]{3}+\Mycomb[q-1]{1}\cdot \Mycomb[p]{2}+\Mycomb[q-1]{3}}{m}$.

Therefore, the expected value of the number of balanced butterflies in the graph can be estimated by sampling a sufficient number of edges and averaging the expected number of balanced butterflies over those edges, multiplied by $\alpha$.  
\end{proof}

\begin{algorithm}[t]
\small
	\SetKwInOut{Input}{Input}
	\SetKwInOut{Output}{Output}
	\Input{$G$$(U,V,E)$, Input signed bipartite graph.}
	\Output{An estimate for the total number of balanced butterflies in $G$.}
    sample a vertex $u$ uniformly at random;\\
    sample a vertex $v$ uniformly from $\Gamma(u)$;\\
    $e\gets (u,v)$;\\
    $\hourglass_e \gets eBBFC(e,G)$; \tcc{Algorithm \ref{ebbfc}}
    return $\hourglass_e \cdot \alpha$;
 \caption{Edge Sampling}
 \label{Algorithm:Esamp}
\end{algorithm}
}
\section{Extensions}

In this section, we will extend our algorithm to compute (1) $\balancebutterfly_u$: balanced butterfly per vertex $u$; and (2) parallel implementation of Algorithm~\ref{Algorithm:Bucket}.  

\subsection{Counting for each vertex}

We modify the Algorithm~\ref{Algorithm:Bucket} for counting the number of balanced butterflies per vertex. Given a vertex $u$, we first look into its neighborhood $\Gamma(u)$, and for every vertex $v$ in $u$'s neighborhood, we will be looking into $v$'s neighbors $w\in\Gamma(v)$. For each wedge $<u,v,w>$, we will check the sign $\sigma(u,v)$ and $\sigma(v,w)$ to decide on which bucket we need to select to put the count of this wedge in. Finally, we count the number of balanced butterflies containing $u$ by combining the counts from each bucket. Note that we drop the rank comparison in choosing the wedges for counting all balanced butterflies containing a particular vertex. The pseudocode of this procedure is presented in Algorithm~\ref{vbbfc}.

\begin{algorithm}[ht!]
\small
	\SetKwInOut{Input}{Input}
	\SetKwInOut{Output}{Output}
	\Input{$G$$(U,V,E)$, Input signed bipartite graph, and a vertex $u$.}
	\Output{$\balancebutterfly_u$,  number of balanced butterflies containing vertex $u$.}
    $\balancebutterfly_u\gets 0$\;
    \For{$v\in\Gamma(u)$}{
        \For{$w\in\Gamma(v)$}{
            \If{$\sigma(u,v) == \sigma(v,w)$}{
                $H_1[w]++$\;
            }
            \Else{
                $H_2[w]++$\;
            }
        }
    }
    \For{$w\in H_1$}{
        $\balancebutterfly_u\gets\balancebutterfly_u+\binom{H_1[w]}{2}$\;
    }
    \For{$w\in H_2$}{
        $\balancebutterfly_u\gets\balancebutterfly_u+\binom{H_2[w]}{2}$\;
    }
 \caption{vBBFC($v, G$): Per vertex balanced butterfly counting}
 \label{vbbfc}
\end{algorithm}

\remove{
\subsection{Counting for each edge}

Suppose we want to count the number of balanced butterflies containing the edge $e= (u,v)$. We start with any vertex, say $u$ and look for all the wedges $<u, w, x>$ such that $x$ is also a neighbor of $v$. Next, for bucketing, we compare the number of $+$ sign and number of $-$ sign of both $<u, w, x>$ and $<u,v,x>$ through the function $\sigma(u, w, x)$ and put in the bucket $H_1$ if the number of signs are the same, in the bucket $H_2$ otherwise. The pseudocode is presented in Algorithm~\ref{ebbfc}.

\begin{algorithm}[ht!]
\small
	\SetKwInOut{Input}{Input}
	\SetKwInOut{Output}{Output}
	\Input{$G$$(U,V,E)$, Input signed bipartite graph, and a vertex $u$.}
	\Output{$\hourglass_e$,  number of balanced butterflies containing edge $e=(u,v)$.}
    $\hourglass_e\gets 0$\;
    \For{$w\in\Gamma(u)$}{
        \For{$x\in\Gamma(w)$}{
            \If{$x\in\Gamma(v)\setminus\{u\}$}{
                \If{$\sigma(u,v,x) == \sigma(u,w,x)$}{
                    $H_1[x]++$\;
                }
                \Else{
                    $H_2[x]++$\;
                }            
            }
            
        }
    }
    \For{$w\in H_1$}{
        $\hourglass_e\gets\hourglass_e+\binom{H_1[w]}{2}$\;
    }
    \For{$w\in H_2$}{
        $\hourglass_e\gets\hourglass_e+\binom{H_2[w]}{2}$\;
    }
 \caption{eBBFC($e, G$): Per edge balanced butterfly counting}
 \label{ebbfc}
\end{algorithm}
}

\remove{
We present two algorithms {\tt vBBFC} and {\tt eBBFC} for counting balanced butterflies containing a specific vertex and edge, respectively. We will be using the same bucketing-based approach for counting locally balanced butterflies. The pseudocode is presented in Algorithm~\ref{vbbfc} for {\tt vBBFC} and in Algorithm~\ref{ebbfc} for {\tt eBBFC}.
}

\subsection{Parallelization}

One appealing aspect of our algorithm is that each vertex $u\in U\cup V$ in the graph $G = (U,V, E)$ can be processed independently. The only obstacle is adding the appropriate wedge information in one of the buckets in parallel. For this, we use a \textbf{parallel hash table} that supports insertion, deletion, and membership queries, and it performs $n$ operations in $O(n)$ \textbf{works} and $O(\log{n})$ \textbf{span} with high probability~\cite{FOCS-91}. The \textbf{work} $W$ of an algorithm is the total number of operations, and the \textbf{span} $S$ of an algorithm is the longest dependency path~\cite{parallel-algorithm}.

For each vertex $u$ in the bipartite graph, we look into each of its neighbors and their neighbors in parallel as there is no dependency in this lookup. Whenever we complete a valid wedge, we increment the count of that wedge starting at $u$ and ending at $w$. For a vertex $u$, we update the count of the wedges for each $w$ in parallel using the parallel hash tables $B_1$ and $B_2$. Finally, for each pair $(u, w)$, we count all butterflies containing both $u$ and $w$ by performing aggregation in parallel using atomic variable with an assumption that each atomic operation is performed in $O(1)$ time. The pseudocode is presented in Algorithm~\ref{Algorithm:parallel}.

\begin{algorithm}[t!]
\small
	\SetKwInOut{Input}{Input}
	\SetKwInOut{Output}{Output}
	\Input{$G$$(U,V,E)$, Input signed bipartite graph.}
	\Output{$\balancebutterfly$,  number of balanced butterflies.}
	\SetKwFunction{FMain}{Balance\_Butterfly}
    \SetKwProg{Fn}{Function}{}{}
    \Fn{\FMain{$G$}}{
        Calculate $p(u)$ for each $u \in \{U \cup V\}$\\
        Rearrange $\Gamma(u)$ for each $u \in \{U \cup V\}$ in ascending order\\
       $\balancebutterfly \gets 0$\\
       $B_1\gets 0$\;
       $B_2\gets 0$\;
       \ForPar{$u \in \{U \cup V\}$}{
       Initialise $H(w)$ for storing wedges corresponding to $v$\\
        \ForPar{$v \in \Gamma(u)$ $|$ $p(v)<p(u)$}{
        \ForPar{$w \in \Gamma(v)$ $|$ $p(w) < p(u)$}{
        \If{$\sigma(u,v) == \sigma(v,w)$}{
        $B_1[w]++$\;
        }
        \Else{
        $B_2[w]++$\;
        }
        }
        }
        \ForPar{$w\in B_1$}{
            $\balancebutterfly \gets\balancebutterfly + \binom{B_1[w]}{2}$    
        }
        \ForPar{$w\in B_2$}{
            $\balancebutterfly \gets\balancebutterfly + \binom{B_2[w]}{2}$    
        }
        }
  }
\caption{\parbucketing: parallelization of \bucketing}
\label{Algorithm:parallel}
\end{algorithm}

\noindent\textbf{Complexity Analysis:}~The total number of operations in the parallel algorithm (Algorithm~\ref{Algorithm:parallel}) is exactly the same as the total number of operations in the sequential algorithm (Algorithm~\ref{Algorithm:Bucket}). Therefore, the parallel algorithm is \textit{work efficient}.

For the analysis of the span of the parallel algorithm, note that, there might be at most $O(n^2)$ hash table operations in $B_1$ and $B_2$ (Line-$12$ and Line-$14$) and $O(n)$ additions (Lines-$15$ to Line-$18$) which contribute to $O(\log{n})$ span of the parallel algorithm \textbf{whp}. The high probability term comes from the parallel hash table operation.
\section{Experiments}\label{sec:experiment}
\subsection{Experimental Setup}

\noindent\textbf{Algorithms:}~We have implemented the following algorithms for the empirical evaluations.

\begin{itemize}
    \item \textbf{\baseline:}~The baseline algorithm for counting the number of balanced butterflies is described in Section~\ref{sec:baseline}.
    \item\textbf{\bucketing:}~This algorithm uses the bucketing concept to efficiently count balanced butterflies presented in Section~\ref{sec:bucket}.
    \item\textbf{\parbucketing:}~A parallel implementation of \bucketing. We use Intel TBB~\footnote{\url{https://github.com/oneapi-src/oneTBB}} library for the parallel implementation. In our implementation, we have used {\tt tbb::parallel\_for} for parallelizing the loop, {\tt tbb::concurrent\_hash\_map} for parallel operations on the buckets, and {\tt tbb::atomic} for atomic operations such as increment the butterfly count.
\end{itemize}

\remove{
We are reporting the results of our empirical studies that have been performed to evaluate the performance of different algorithms. Specifically, we have conducted experiments to compare the effectiveness and efficiency of the algorithms under consideration against one another. The results of our experiments are presented in this report, and they provide insights into the performance characteristics of each algorithm, allowing for informed decision-making about which algorithm to use in practice.
}

\noindent\textbf{Computing Resources.}~We implemented all the Algorithms in C++. We evaluated all the sequential implementations on Intel Xeon Silver 4114 (10 physical cores) with $2.2$GHz frequency and parallel implementations in our HPC server equipped with $64$ physical cores AMD EPYC 7542 processor and $256$GB main memory.


\remove{
We conducted all the experiments with C++ algorithms on a Linux server with HP ML10 Intel Xeon E3-1225 V5 processors and 32GB of main memory. If an algorithm runs for more than 10 hours, it is considered to have exceeded its maximum allowed running time, and it will be terminated. In other words, the algorithm is subject to a time limit of 10 hours, and if it has not completed its execution within this time frame, it will be stopped or terminated.
}

\subsection{Datasets}

We use $9$ datasets for our experiments out of which $3$ (\bonanza, \senate, and \house)~\footnote{\url{https://github.com/tylersnetwork/signed_bipartite_networks}} are the real-world signed bipartite graphs. We use rest of the datasets from Konect~\footnote{\url{http://konect.cc/networks/}}. Next, we create a signed bipartite graph in the following way: For each edge in the graph, we assign $1$ with probability $0.5$ to represent $+$ive edge and assign $0$ with probability $0.5$ to represent $-$ive edge. This completes our network construction. The description of the datasets is presented in Table~\ref{tab-1}.

\label{label:expsetting}
\noindent
\begin{table}[t]
\small
\caption{\label{table:datasets}
Characteristics of the Data sets}
\vspace*{-4mm}
\begin{center}
\scalebox{0.888}{
\begin{tabular}{|c|c||c|c||c|}
\hline
\textbf{Dataset}& $|E|$  & $|U|$ & $|V|$ &  $|\balancebutterfly|$  \\
\hline
\bonanza (\bonanzas) & $36$K & $7919$ & 
$1973$ & $641$K\\
\hline
\senate (\senates) & $27$K & $145$ & $1056$ & $15.3$M  \\
\hline
\house (\houses) & $114$K & $515$ & $1281$ & $280.8$M \\
\hline
\hline
\dbpedia (\dbpedias) & $293$K & $172$K & $53$K & $1.88$M \\
\hline
\twitter (\twitters) & $1.8$M & $175$K & $530$K & $206.5$M \\
\hline
\amazon (\amazons) & $5.8$M & $2.1$M & $1.2$M & $82.6$K\\
\hline
\livejournal (\livejournals)  & $112$M & $3.2$M & $7.4$M & $1.2$T \\
\hline
\tracker (\trackers)  & $140$M & $27.7$M & $12.7$M & $737.4$B \\
\hline
\orkut (\orkuts) & $327$M & $2.7$M & $8.7$M & $11$T \\
\hline
\end{tabular}}
\end{center}
\label{tab-1}
\end{table}

\remove{
\begin{table*}[ht]
\small
\caption{\label{table:datasets}
Results of Data sets}
\vspace*{-4mm}
\begin{center}
\scalebox{0.888}{
\begin{tabular}{|c||c|c||c|}
\hline
\textbf{Data set}& $|E|$  & $|U|$ & $|V|$ & $|\butterfly|$ &  $|\balancebutterfly|$  \\
\hline
\bonanza (\bonanzas) & $36$K & $9892 (7919 + 1973)$ & 
$671893$ & $641108$\\
\hline
\senate (\senates) & $27$K & $1201 (145 + 1056)$ & $25666956$ & $15323136$ \\
\hline
\house (\houses) & $114$K & $1796 (515 + 1281)$ & $469609963$ & $280793031$ \\
\hline
\hline
\dbpedia (\dbpedias) & $293$K & $225486(172K + 53K)$ & $3.76 \times 10^{6}$ & $1.88 \times 10^{6}$ \\
\hline
\twitter (\twitters) & $1.8$M & $705632(175K + 530K)$ & $2.07 \times 10^{-8}$ & $206508691$ \\
\hline
\amazon (\amazons) & $5.8$M & $3376972(2.1M + 1.2M)$ & $3.67 \times 10^{7}$ & $82569$\\
\hline
\livejournal (\livejournals)  & $112$M & $10690276 (3.2M + 7.4M)$ & $2021553152615$ & $1240565902932$ \\
\hline
\tracker (\trackers)  & $140$M & $40421974 (27665730,12756244)$ & $914160553338$ & $737385019678$ \\
\hline
\orkut (\orkuts) & $327$M & $11.5$M(2.7M + 8.7M) & $22.1$T & $11$T \\
\hline
\end{tabular}}
\end{center}
\end{table*}
}

\begin{table}[b]
\small
\caption{\label{table:datasets}
Runtime (sec.) on real-world data sets.}
\vspace*{-4mm}
\begin{center}
\scalebox{0.888}{
\begin{tabular}{|c|c|c|}
\hline
\textbf{Dataset}& \baseline  & \bucketing  \\
\hline
 \dbpedias & $1.271886$ & $0.231298$ (\textbf{5x})  \\
\hline
 \twitters & $95.548655$ & $18.247842$ (\textbf{5x}) \\
\hline
 \amazons & $53.604974$ &  $50.416742$\\
\hline
 \livejournals  & $> 8 $ hrs & $5592$  \\
\hline
\bonanzas & $2.075977$ & $0.26834$ (\textbf{8x}) \\
\hline
\senates & $57.722121$ & $0.481628$ (\textbf{120x})  \\
\hline
\houses & $1159.878415$ & $8.985430$ (\textbf{129x})\\
\hline

\trackers  & $> 24 $ hrs & $12471$  \\
\hline

\orkuts & $> 96 $ hrs & $232$K \\
\hline
\end{tabular}}
\end{center}
\end{table}

\remove{
\noindent
In our experiments, we utilized a total of {8} datasets to evaluate the performance of our technique. These {8} datasets included all of the {6} real datasets mentioned in our reference, which were selected to ensure fairness in our comparisons.

The 6 real datasets we used are \texttt{DBPedia} \cite{dbpedia}, \texttt{Twitter} \cite{Twitter}, \texttt{Amazon} \cite{Amazon}, \texttt{Live-journal} \cite{Live-journal}, \texttt{Delicious} \cite{Delicious}, \texttt{Tracker} \cite{Tracker}.

The summary of datasets is shown in Table \ref{table:datasets}.
}
\subsection{Performance Assessment.}
Here we evaluate the performance of the proposed algorithms on all the dataset. 

\noindent\textbf{Computation time:}~The computation time of \bucketing is substantially faster than the \baseline on most of the graphs. For example, on the \senate and \house datasets, we see that \bucketing is more than $100$ times faster than the baseline. This is because, in the \baseline, we exhaustively enumerate each butterfly and check if it is a balanced one or not, whereas, in \bucketing, we avoid computing butterflies that are not balanced by grouping symmetric and asymmetric wedges in two buckets, such that all the butterflies formed in each of the buckets is guaranteed to become balanced.

We observe that the speedup in the runtime of \textbf{\bucketing} compared to the \textbf{\baseline} is not uniform across the datasets. To understand the reason for this non-uniformity in the speedup, we studied the frequency distribution of the size of the neighborhood's typical to vertices $u$ and $w$ for all distinct vertex pairs $(u,w)$ where $u$ and $w$ are the two vertices of a butterfly located in the same side of the bipartite network. The results are shown in Figure~\ref{figure:deg-dist}. In the \baseline, we iterate over all possible pair of $2$ vertices taken from the common neighborhood of $u$ and $w$ to decide which one is balanced and which one is not. Therefore, this cost of iteration increases when the size of the common neighborhood becomes large, and this large size common neighborhood occurs more frequently, which is evident from the diagram. For example, the frequency of the common neighborhood of size more than $100$ is high in the case of \house and \senate (Figure~\ref{fig:house-deg-dst} and Figure~\ref{fig:senate-deg-dst}) compared to other three data \amazon, \dbpedia, and \bonanza (Figure~\ref{fig:amazon-deg-dst}, Figure~\ref{fig:dbpedia-deg-dst}, and Figure~\ref{fig:bonanza-deg-dst}) and therefore, the speedup of the \bucketing (compared with \baseline) for \house and \senate is much higher (more than $100\times$) than that of \amazon, \dbpedia, and \bonanza.

\begin{figure*}[t!]
	\centering
	\vspace{-3ex}
	\resizebox{1\linewidth}{!}{
	\subfloat[][{\amazon}]{\includegraphics[width=0.19\textwidth]{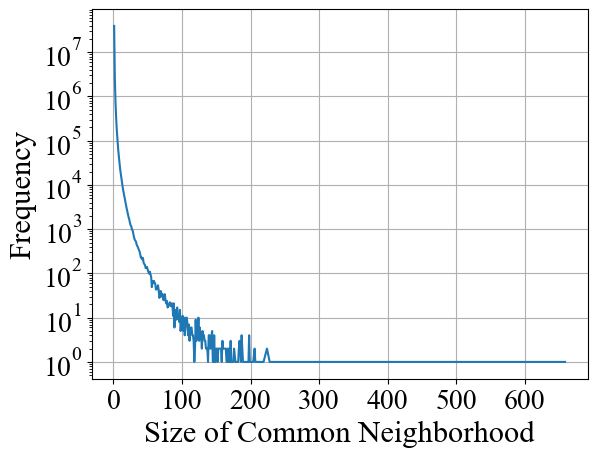}\label{fig:amazon-deg-dst}}~{}~{}
	\hspace{1ex}
	\subfloat[][{\dbpedia}]{\includegraphics[width=0.19\textwidth]{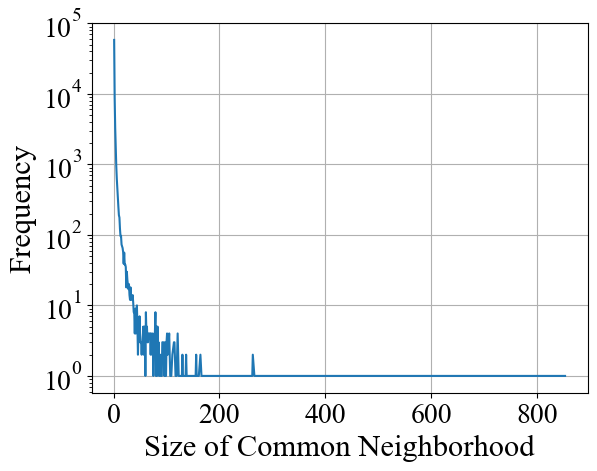}\label{fig:dbpedia-deg-dst}}~{}~{}
        \hspace{1ex}
	\subfloat[][\bonanza]{\includegraphics[width=0.19\textwidth]{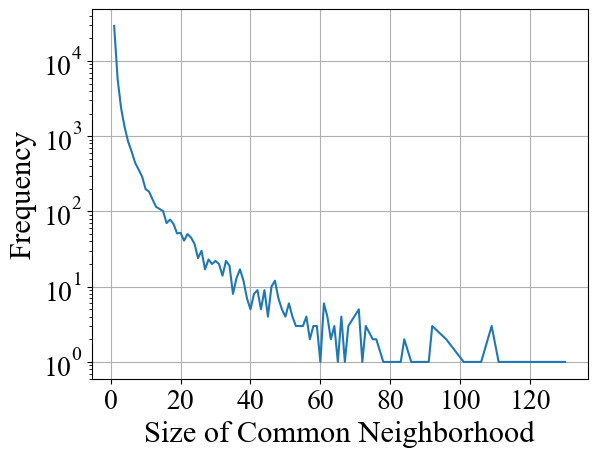}\label{fig:bonanza-deg-dst}}~{}~{}
	\hspace{1ex}
	\subfloat[][\house]{\includegraphics[width=0.19\textwidth]{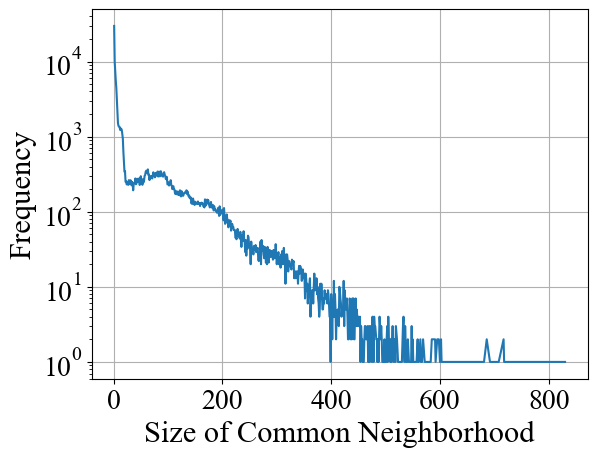}\label{fig:house-deg-dst}}~{}~{}
        \hspace{1ex}
	\subfloat[][\senate]{\includegraphics[width=0.19\textwidth]{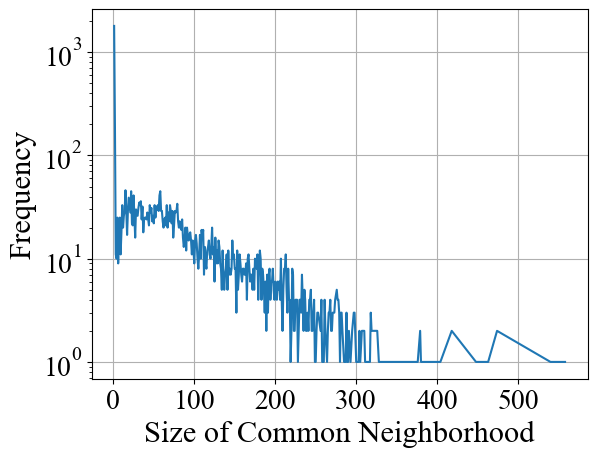}\label{fig:senate-deg-dst}}~{}~{}

	}
	\vspace{-1ex}
	\caption{ \bf {Frequency distribution of the neighborhood sizes common to $u$ and $w$ while computing the number of butterflies containing vertices $u$ and $w$ in one side of the network.}
	}\label{figure:deg-dist}
	\vspace{-3ex}
\end{figure*}


\noindent\textbf{Parallel execution time:}~In this experiment, we implemented parallel bucketing-based algorithm \parbucketing using Intel TBB and tested it on a 64-core computer. The execution time of the \parbucketing with $1$ core and $64$ cores are presented in Table~\ref{tab:parallel-time}. We observe that the parallel implementation provides decent speedup on most of the large graphs without performing any further optimizations. All the speedup results of the parallel algorithm are compared with the runtime of the parallel algorithm \parbucketing while executing in a single core. We also present the scalability result in Figure~\ref{fig:parallel-scalability} for the datasets where we obtain more than $20\times$ speedup. It shows that the speedup is almost linear with respect to the number of processors.

\remove{
\begin{figure}[h!]
\centering
\includegraphics[width=0.46\textwidth]{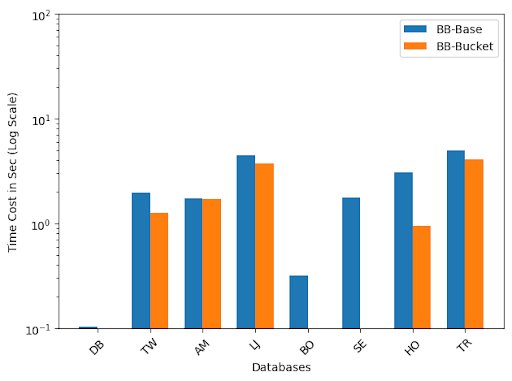}
\caption{Performance on different datasets}
\label{fig:all}
\end{figure}
}

\begin{figure}[h!]
\centering
\includegraphics[width=0.44\textwidth]{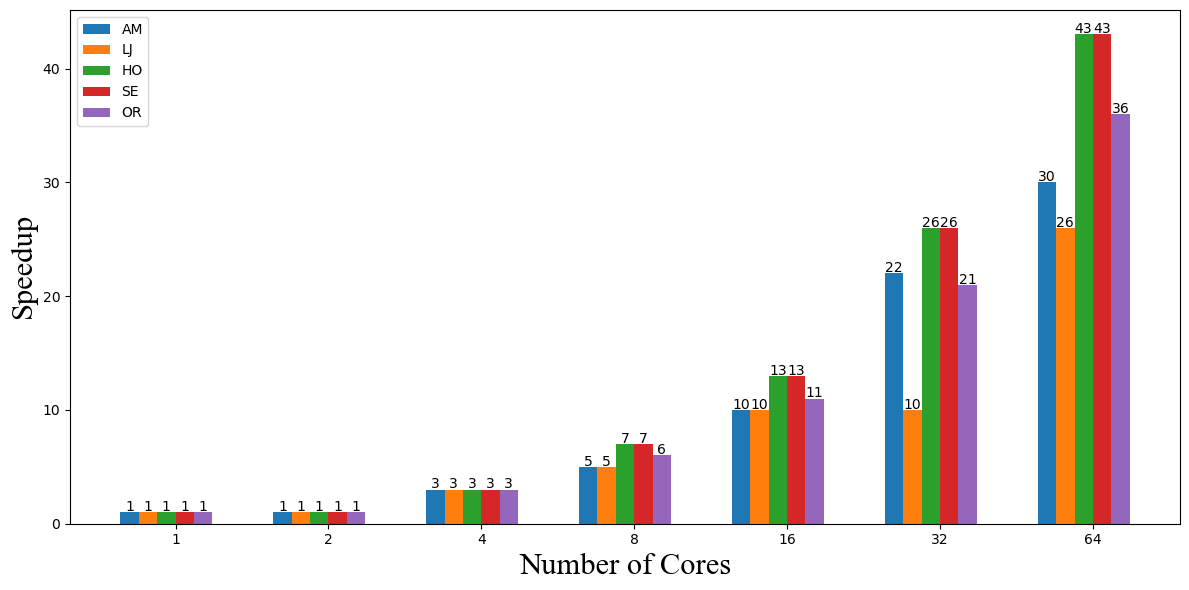}
\caption{Parallel Run time on different Cores with different Datasets}
\label{fig:parallel-scalability}
\end{figure}

\remove{
The figure {3}, displays the performance evaluation of algorithms that calculate signed Butterflies for every edge in a given graph G. The results indicate that the Bucketing approach algorithm outperforms other algorithms in terms of computational efficiency.
}

\begin{table}[t]
\small
\caption{Parallel Runtime (sec.) on real-world data sets.}
\label{tab:parallel-time}
\vspace*{-4mm}
\begin{center}
\scalebox{0.888}{
\begin{tabular}{|c|c|c|}
\hline
\textbf{Dataset}  & \parbucketing (using $1$ core) & \parbucketing (using $64$ cores)  \\
\hline
 \dbpedias & $1.39$ & $0.08$ (\textbf{17.4x})\\
\hline
 \twitters & $75.73$ & $4.63$ (\textbf{16.4x})\\
\hline
 \amazons & $160.7$ & $5.22$ (\textbf{30.8x})\\
\hline
 \livejournals  & $14503$ & $557$ (\textbf{26x})\\
\hline
\bonanzas & $0.35$ & $0.06$ (\textbf{5.8x})\\
\hline
\senates & $0.7$ & $0.03$  (\textbf{23x})\\
\hline
\houses & $10.23$ & $0.23$ (\textbf{44.5x})\\
\hline

\trackers  & $21928$ & $1747$ (\textbf{12.5x})  \\
\hline

\orkuts & $231982$ &  $6435$ (\textbf{36x})\\
\hline
\end{tabular}}
\end{center}
\vspace{-5 pt}
\end{table}
\subsection{Case Study}
We conduct a case study for counting balanced butterflies in a singed bipartite graph of \textit{movie-actor} network. 
Counting the number of balanced butterflies is able to deduce many of interesting outcomes on such data set.
In this case study we compare two list of actors: (i) top-$10$ actors with most number of positive butterflies only (Fig. \ref{fig:graphA}), say List $1$. (ii) top-$10$ actors with most number of positive edges, say List $2$.
To achieve these result, we present some modifications to the raw actor-movie data set so as to convert it in a proper signed bipartite network. The characteristics of the resulting data set used are as follows. It is composed of two distinct set of vertices, i.e., actors and movies (unique ID). Two vertices are connected by an edge if an actor as participated in a movie, and the binary edge weight $0$ or $1$ represent the positive or negative relation among the vertices. The binary sign value of $1$ is assigned to graph connections where the IMDb score for a movie was $\geq 6$, while a value of $0$ was assigned otherwise. Figure \ref{fig:Case_Study} represents a snapshot of the resulting signed bipartite graph. We now study the three questions for our case study.

The resulting List $1$ and $2$ are \{Johnny Depp, Leonard Nimoy, Nichelle Nichols, Jennifer Lawrence, Robert Downey Jr., Jason Statham, Scarlett Johansson, Orlando Bloom, Steve Buscemi
J.K. Simmons\} and \{Robert De Niro, 
Morgan Freeman, Bruce Willis, Matt Damon
Johnny Depp, Steve Buscemi, Brad Pitt
Nicolas Cage, Bill Murray, Will Ferrell\} respectively. Comparing these lists, we have following inferences: (i) \textit{Unique Actors}: List $1$ includes some unique and distinctive actors who are not present in List $2$. Actors like Leonard Nimoy, Nichelle Nichols, Jennifer Lawrence, Robert Downey Jr., Jason Statham, and Orlando Bloom bring their own unique styles and fan bases, adding diversity to List $1$.  
(ii) \textit{Iconic Characters and Cultural Impact}: List $1$ includes actors like Leonard Nimoy, Nichelle Nichols, and Johnny Depp, who have portrayed iconic characters with significant cultural impact. Leonard Nimoy's portrayal of Spock in `Star Trek', for example, has become an iconic and influential character in the science fiction genre. While List $2$ also includes Johnny Depp, List $1$ has a more diverse range of actors known for their iconic roles. 
(iii) \textit{Versatility in Genres}: List $1$ boasts actors known for their versatility across various genres. Jennifer Lawrence has showcased her talent in both dramatic and action films, while Jason Statham has established himself as an action star. This versatility allows List $1$ to cater to different genre preferences and attract a wider audience.
(iv) \textit{Overall Prominence}: Although, list $2$ includes a strong lineup of prominent actors, including Robert De Niro, Morgan Freeman, and Brad Pitt, who are widely recognized and have made significant contributions to the film industry. List $1$ has a mix of established actors like Johnny Depp and emerging talents like Jennifer Lawrence, contributing to its own level of prominence. Therefore, it helps in recognizing the upcoming talent in the movie industry.

In summary, List $1$ and List $2$ have similarities in terms of blockbuster success, recognition, and cultural impact. However, List $2$ is more sophisticated as it includes notable names and actors with a strong overall prominence. Moreover, we have also shown top-$4$ actors based on positive number of butterflies and their respective movies in Table \ref{Table:Case_study}.

\begin{figure}[t]
\centering

\includegraphics[scale=0.17]{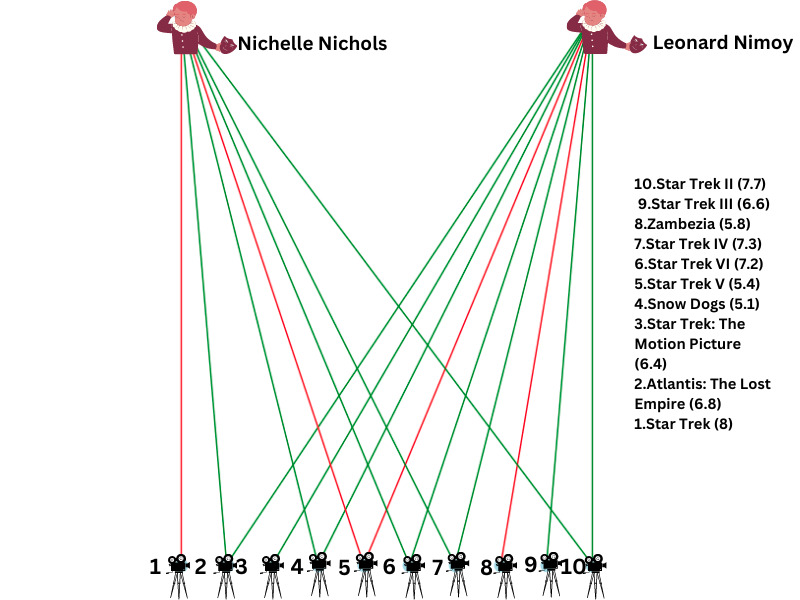}
\caption{Snapshot of the modified actor-movies data set with two actors, i.e., Nichelle Nichols and Leonard Nimoy, and their respective movies.}
\label{fig:Case_Study}

\end{figure}

\label{label:expsetting}
\noindent
\begin{table*}[ht]
\small
\caption{\label{table:datasets}
Results of Actor Movie}
\vspace*{-4mm}
\begin{center}
\resizebox{17cm}{!}{
\begin{tabular}{|c|c|c|c|}
\hline
$Actor-Johnny Depp$  & $Jennifer Lawrence$ & $Actor-Jason Flemyng
$ & $Actor-Aidan Turner
$  \\
 $15(BBF)$ & $12(BBF)$ & $7(BBF)$ & $6(BBF)$ \\
\hline
$Pirates of the Caribbean: The Curse of the Black Pearl (8.1)$ & $X-Men: Days of Future Past(8)$ & $Lock, Stock and Two Smoking Barrels
(8.2)$ & $The Hobbit: The Desolation of Smaug(7.9)$ \\
\hline
$Platoon (8.1)$ & $Star Trek VI: The Undiscovered Country (7.2)$ & $The Curious Case of Benjamin Button (7.8)$ & $The Hobbit: An Unexpected Journey(7.9)$ \\
\hline
$Edward Scissorhands (7.9)$ & $Silver Linings Playbook(7.8)$ & $Rob Roy(6.9)$ & $The Hobbit: The Battle of the Five Armies(7.5)$ \\
\hline
$Ed Wood(7.9)$ & $X-Men: First Class(7.8)$ & $From Hell(6.8)$ & $The Mortal Instruments: City of Bones(6)$ \\
\hline
$Donnie Brasco(7.8)$ & $The Hunger Games: Catching Fire(7.6)$ & $Mean Machine(6.5)$ & $ $ \\
\hline
$Finding Neverland(7.8)$ & $X-Men: Apocalypse(7.3)$ & $Transporter 2(6.3)$ & $ $ \\
\hline
$What's Eating Gilbert Grape(7.8)$ & $The Hunger Games(7.3)$ & $Deep Rising
(6)$ & $ $ \\
\hline
$Fear and Loathing in Las Vegas(7.7)$ & $American Hustle(7.3)$ & $Clash of the Titans(5.8)$ & $ $ \\
\hline
$Blow(7.6)$ & $Winter's Bone(7.2)$ & $Ironclad(6.2)$ & $ $ \\
\hline
$A Nightmare on Elm Street(7.5)$ & $The Beaver(6.7)$ & $Mirrors(6.2)$ & $ $ \\
\hline

\end{tabular}}
\end{center}
\label{Table:Case_study}
\end{table*}

\remove{
\begin{figure}
\centering
\includegraphics[width=0.46\textwidth]{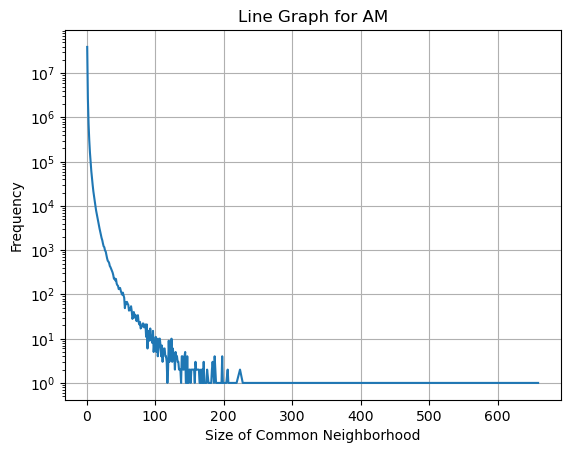}
\caption{line plot}
\label{fig:all}
\end{figure}

\begin{figure}
\centering
\includegraphics[width=0.46\textwidth]{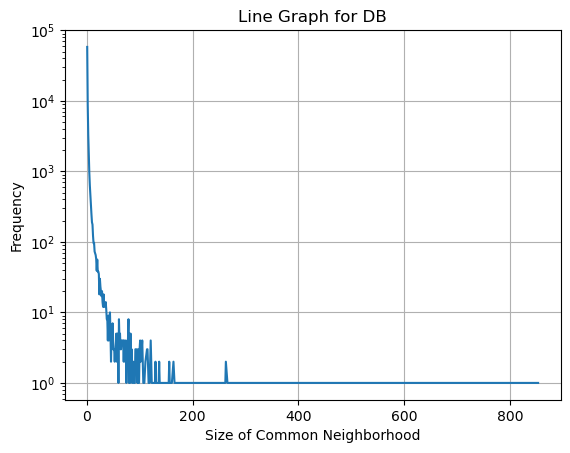}
\caption{line plot}
\label{fig:all}
\end{figure}

\begin{figure}
\centering
\includegraphics[width=0.46\textwidth]{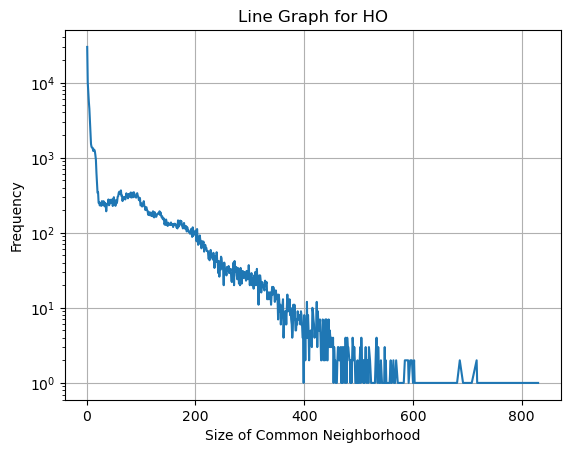}
\caption{line plot}
\label{fig:all}
\end{figure}
}

\section{Related Work}
\noindent\textbf{Motif counting in unipartite graphs.}
Triangles are considered the smallest non-trivial cohesive structure and serve as the fundamental building block for unipartite graphs. There have been several studies on counting triangles in unipartite graphs, including \cite{stefani2017triest, al2018triangle, kolountzakis2010efficient, shun2015multicore}.
Since the butterfly is comparable to a triangle in a unipartite graph and serves as the smallest unit of cohesion in bipartite graphs \cite{abidi2022searching}, it is natural to consider leveraging the existing algorithms for triangle counting to count butterflies. However, butterfly counting in bipartite graphs presents significant challenges due to two key reasons. Firstly, the number of butterflies in a bipartite graph can be much larger than that of triangles, resulting in a higher computational complexity \cite{wang2019vertex}. Secondly, the structures of butterflies and triangles are different, with butterflies forming $4$-path cycle and triangles forming $3$-path cycle. Therefore, the existing triangle counting techniques are not directly applicable to efficiently count butterflies in bipartite graphs.

\noindent\textbf{Motifs in the bipartite network.}
Bipartite graphs possess a unique structure consisting of two independent vertex sets, which can make them challenging to analyze using traditional graph metrics. Motifs in bipartite graphs are comprised of nodes from both partitions and can reveal crucial relationships between different types of nodes in the graph. Several notable bipartite graph motifs have been proposed in the literature \cite{sariyuce2015finding, borgatti1997network, opsahl2013triadic, robins2004small, latapy2008basic}. The ($3,3$)-biclique was defined by Borgatti and Everett as the smallest motif to study cohesiveness in bipartite graphs \cite{borgatti1997network}. Opsahl introduced the closed $4$-path, which is defined as the path of four nodes contained in a cycle of at least six nodes \cite{opsahl2013triadic}. Robins and Alexander utilized the ($2,2$)-biclique to model cohesion \cite{robins2004small} and examined $3$-paths, consisting of three edges with two vertices from each set. In \cite{sariyuce2015finding}, the $k-(r,s)$-nucleus motif was proposed, where tuning the parameters $k, r,$ and $s$ resulted in motifs with hierarchical properties that address various exclusive applications. Several density measures for bipartite graphs, such as those presented in \cite{latapy2008basic}, have also been proposed. However, our focus is on subgraph motifs. Although several motifs are available for bipartite networks, the butterfly or $(2,2)$-biclique or rectangle is considered the basic building block because they are analogous to triangles in unipartite graphs \cite{aksoy2017measuring, sanei2018butterfly, wang2014rectangle}.

\noindent\textbf{Butterfly counting in bipartite graphs.}
Butterfly counting has proven to be useful for fast and global understanding of a network information, and can also be exploited in finding more complex structures such as bicliques, $k$-wing and $k$-bitrusses. A substantial body of bipartite motif's research has been devoted to the counting of butterflies, encompassing an array of sophisticated methodologies including deterministically-driven, parallelized, cache-aware, and heuristic approaches. Some of the notable works are as follows.
\cite{wang2014rectangle} was among the first algorithms to address the problem of butterfly counting in bipartite graphs. The novel vertex priority butterfly counting was proposed in \cite{wang2019vertex}, which exploited the degree of vertices on both the sides in a bipartite graph. . 
Various parallel and heuristic algorithms have also been proposed for counting butterflies in bipartite network such as parallel algorithms in \cite{xu2022efficient, weng2022distributed, shi2020parallel} and approximate approaches in \cite{li2021approximately, sanei2019fleet}. \cite{sheshbolouki2022sgrapp} studied the new problem of butterfly counting in streaming bipartite graph and proposed an approximate approach to address it.
Recently, for understanding probabilistic bipartite graphs, uncertain butterfly counting problem has been studied in \cite{zhou2021butterfly}. Some of the other recent notable motif counting in bipartite graphs also include $6$-cycle in bipartite graph \cite{niu2022counting} and Bi-triangle in bipartite graphs \cite{yang2021efficient}. 





\section{Conclusion}

A butterfly or a $4$-cycle in a signed bipartite network is balanced with an even number of negative edges. We introduce the problem of counting balanced butterflies in a signed bipartite graph which is a fundamental problem in analyzing a signed bipartite network motivated by important applications of balanced butterflies. We first develop an efficient sequential algorithm \bucketing that can count all balanced butterflies without explicitly checking the sign of each edge. Next, we develop a parallel algorithm \parbucketing through parallelizing the iterations. We empirically evaluate real-world and synthetic datasets to show that \bucketing outperforms the baseline \baseline and the parallel algorithm \parbucketing scales almost linearly on large networks and provides a significant speedup compared to the sequential implementation.

\bibliographystyle{ACM-Reference-Format}
\bibliography{references}

\end{document}